\tikzstyle{path} = [color=black,opacity=.30,line cap=round, line join=round, line width=10pt]
\newlength{\RoundedBoxWidth}
\newsavebox{\GrayRoundedBox}
\newenvironment{GrayBox}[1]%
   {\setlength{\RoundedBoxWidth}{.93\textwidth}
    \def\boxheading{#1}
    \begin{lrbox}{\GrayRoundedBox}
       \begin{minipage}{\RoundedBoxWidth}}%
   {   \end{minipage}
    \end{lrbox}
    \begin{center}
    \begin{tikzpicture}%
       \node(Text)[draw=black!20,fill=white,rounded corners,%
             inner sep=2ex,text width=\RoundedBoxWidth]%
             {\usebox{\GrayRoundedBox}};
        \coordinate(x) at (current bounding box.north west);
        \node [draw=white,rectangle,inner sep=3pt,anchor=north west,fill=white] 
        at ($(x)+(6pt,.75em)$) {\boxheading};
    \end{tikzpicture}
    \end{center}}     
\newenvironment{defproblemx}[2][]{\noindent\ignorespaces%
                                \FrameSep=6pt%
                                \parindent=0pt%
                \vspace*{-1.5em}
                \ifthenelse{\isempty{#1}}{%
                  \begin{GrayBox}{\textsc{#2}}%
                }{%
                  \begin{GrayBox}{\textsc{#2}  parameterized by~{#1}}%
                }
                \begin{tabular*}{\textwidth}{@{\hspace{.1em}} >{\itshape} p{1.8cm} p{0.8\textwidth} @{}}%
            }{
                \end{tabular*}%
                \end{GrayBox}%
                \ignorespacesafterend
            }
\newcommand{\defproblema}[3]{
  \begin{defproblemx}{#1}
    Input:  & #2 \\
    Task: & #3
  \end{defproblemx}
}%
\newtheorem{lemma}{Lemma}
\newtheorem{proposition}{Proposition}
\newtheorem{observation}{Observation}
\crefname{@theorem}{Theorem}{Theorems}
\crefname{proposition}{Proposition}{Propositions}
\crefname{redrule}{Reduction Rule}{Reduction Rules}
\DeclareMathOperator{\operatorClassNP}{NP}
\newcommand{\classNP}{\ensuremath{\operatorClassNP}\xspace}
\DeclareMathOperator{\operatorClassCoNP}{coNP}
\newcommand{\classCoNP}{\ensuremath{\operatorClassCoNP}}
\DeclareMathOperator{\operatorClassFPT}{FPT\xspace}
\newcommand{\classFPT}{\ensuremath{\operatorClassFPT}\xspace}
\DeclareMathOperator{\operatorClassW}{W}
\newcommand{\classW}[1]{\ensuremath{\operatorClassW[#1]}}
\DeclareMathOperator{\operatorClassParaNP}{Para-NP\xspace}
\newcommand{\classParaNP}{\ensuremath{\operatorClassParaNP}\xspace}
\newcommand{\cqed}{\ensuremath{\lhd}}
\newenvironment{claimproof}{\par
  \pushQED{\cqed}%
  \normalfont \topsep6\p@\@plus6\p@\relax
  \trivlist
  \item\relax
  {\itshape
    Proof of the claim\@addpunct{.}}\hspace\labelsep\ignorespaces
}{%
\hfill\popQED\endtrivlist\@endpefalse
}
\newcommand{\ftb}{\textsc{Fault-Tolerant Basis}\xspace}
\newcommand{\wftb}{\textsc{Weighted Fault-Tolerant Basis}\xspace}
\newcommand{\Oh}{\mathcal{O}}
\newcommand{\cM}{\mathcal{M}}
\newcommand{\cI}{\mathcal{I}}
\newcommand{\cB}{\mathcal{B}}
\newcommand{\cF}{\mathcal{F}}
\newcommand{\cZ}{\mathcal{Z}}
\newcommand{\rank}{\mathsf{rank}}
\newcommand{\cl}{\mathsf{cl}}
\newcommand{\bfA}{\mathbf{A}}
\title{Fault-Tolerant Matroid Bases\thanks{The research leading to these results have  been supported by the Research Council of Norway via the BWCA project (grant no. 314528), the Franco-Norwegian AURORA project (grant no. 349476), 
and the ERC Horizon 2020 research and innovation programme (grant agreement No. 819416).}}
\author{Matthias Bentert\thanks{University of Bergen, Norway. \texttt{\{Matthias.Bentert, Fedor.Fomin, Petr.Golovach\}@uib.no}} \and Fedor V. Fomin \addtocounter{footnote}{-1}\footnotemark \and Petr A. Golovach\addtocounter{footnote}{-1}\footnotemark \and 
Laure Morelle\thanks{LIRMM, Univ Montpellier, CNRS, Montpellier, France. \texttt{laure.morelle@lirmm.fr}}
}
\date{}
\begin{document}

\maketitle

\begin{abstract}
We investigate the problem of constructing \emph{fault-tolerant bases} in matroids. Given a matroid \(\cM\) and a redundancy parameter~\(k\), a \(k\)-fault-tolerant basis is a minimum-size set of elements such that, even after the removal of any \(k\) elements, the remaining subset still spans the entire ground set. Since matroids generalize linear independence across structures such as vector spaces, graphs, and set systems, this problem unifies and extends several fault-tolerant concepts appearing in prior research.

Our main contribution is a fixed-parameter tractable (\classFPT) algorithm for the \(k\)-fault-tolerant basis problem, parameterized by both \(k\) and the rank \(r\) of the matroid. This two-variable parameterization by \(k + r\) is shown to be tight in the following sense. On the one hand, the problem is already  \classNP-hard for \(k=1\). On the other hand, it is \classParaNP-hard for~\(r \geq 3\) and polynomial-time solvable for~$r \leq 2$.  
\end{abstract}

\section{Introduction}
 A basis in a $d$-dimensional vector space $\mathcal{V}$ is a set of exactly $d$ linearly independent vectors~$\{\mathbf{v}_1, \dots, \mathbf{v}_d\}$. Any vector $\mathbf{x} \in \mathcal{V}$ 
 has a \emph{unique} representation
$
  \mathbf{x} \;=\; \alpha_1 \mathbf{v}_1 \;+\;\cdots\;+\; \alpha_d \mathbf{v}_d.
$
However, if certain basis vectors are lost or corrupted---for instance, in a distributed storage system where vectors are stored across multiple servers that can malfunction---it may be challenging or impossible to accurately recover $\mathbf{x}$.
The algorithmic question we address in this paper is:
\emph{For an expected number~$k$ of failures, what is the minimum-size $k$-fault-tolerant set of vectors that still guarantees data reconstructability?}

A natural way to capture the notions of basis and independence is through \emph{matroids}, which generalize linear independence to a variety of combinatorial settings (like graphs, set systems, vector spaces).
Let $\cM = (E, \cI)$ be a matroid, and let $k$ be a nonnegative integer. We say that~$B \subseteq E$ is a \emph{$k$-fault-tolerant basis} of $\cM$ if $B$ is a minimum-size set such that, for every subset~$F \subseteq B$ of size at most $k$, we have $\rank(B \setminus F) = \rank(\cM)$. (All necessary matroid definitions appear in the Preliminaries section.)
Equivalently, this condition means that~${\cl(B \setminus F) = E}$, where $\cl(\cdot)$ denotes matroid closure. Note that this concept generalizes the standard basis definition: a $0$-fault-tolerant basis is simply a basis of $\cM$. However, while every matroid has a basis, it may not admit a $k$-fault-tolerant basis for $k \ge 1$.
We investigate the following problem.

\defproblema{\ftb}
{A matroid $\cM=(E,\cI)$ and  nonnegative integer $k$.}
{Find a $k$-fault-tolerant basis $B\subseteq E$ or correctly decide that none exists.}
Let us give some examples of \ftb. 

\subparagraph{Linear Matroids.}
A \emph{linear matroid} over a field $\mathbb{F}$ can be represented by a set of vectors~$\{\mathbf{v}_1, \dots, \mathbf{v}_m\}$ in a vector space $\mathcal{V} \subseteq \mathbb{F}^d$. The rank of the matroid corresponds to $\dim(\mathcal{V})$.
A basis here is simply a set of $d$ linearly independent vectors. 
A set of vectors $\{\mathbf{u}_1, \ldots, \mathbf{u}_r\}$  is $k$-fault-tolerant if, after removing up to $k$ vectors, the remaining set still spans the entire subspace $\mathcal{V}$.    Such concepts naturally arise in {distributed storage}---where vectors might be stored 
on different servers \cite{dimakis2010network}---and in  {coding theory}, where losing some symbols or vectors 
should not destroy the ability to recover the entire space~\cite{elad2010sparse}.

\subparagraph{Graphic Matroids.} For a connected graph~$G$, its \emph{graphic matroid}~$\mathcal{M}(G)$ has ground set~$E(G)$, and a set~$I\subseteq E(G)$ is independent if it forms a {forest}. A basis is thus a  {spanning tree} of $G$.
In this matroid, a set of edges $B \subseteq E(G)$ is $k$-fault-tolerant if, upon removing any~$k$~edges, the subgraph remains connected.  Thus, a $k$-fault-tolerant basis is a $(k+1)$-edge connected spanning subgraph with minimum number of edges. This problem is known in the literature as \textsc{$k$-Edge-Connected Spanning Subgraph}~\cite{ChalermsookHNSS22,Fernandes98,GabowGTW09}.
It is motivated by network applications as its task is to compute a minimum-cost sub-network that is resilient against up to $k$ link failures (see also~\cite{BentertSS23}).

\subparagraph{Gammoids.}
A \emph{gammoid} is derived from a directed (or undirected) graph $D$.  
Let $X$ and~$Y$ be two distinguished sets of vertices of $V(D)$.  
Within the set $X$, define a subset $U \subseteq X$ to be \emph{independent} if there exist $|U|$ vertex-disjoint paths in $G$ originating from vertices in~$Y$ and ending in the vertices of $U$.  The basis of the gammoid is the maximum set of vertices from~$X$ that can be reached by vertex-disjoint paths from~$Y$. In this case, a vertex set~$Z\subseteq X$ is a
 $k$-fault-tolerant basis if  upon removing any $k$ vertices from $Z$, the remaining vertices  are still linkable from sources in $Y$ without losing rank.  This problem is related to the problem of finding fault-tolerant disjoint paths studied by Adjiashvili et al.~\cite{AHMS22}.

\subparagraph{Transversal Matroids.}
A \emph{transversal matroid} $\mathcal{M}(U,\mathcal{S})$ is  described via a bipartite graph between a ground set $U$ and a set of ``target positions'' $\mathcal{S}$. A set $I \subseteq U$ is independent if it can be matched injectively to distinct elements in $\mathcal{S}$.
The basis of the transversal matroid is a maximum-size subset $I \subseteq U$ that can be perfectly matched into $\mathcal{S}$. 
Finding a $k$-fault-tolerant basis for a transversal matroids is naturally related to robust matching problems, where one wants to preserve a perfect (or maximum) matching under the loss of a few vertices or edges~\cite{dourado2015robust,hommelsheim2021secure}.


\subsection{Our Contribution}
We investigate the parameterized complexity of \ftb under natural parameterizations by $k$ and the rank $r$ of the input matroid. (We refer to the textbook by Cygan et al.~\cite{PCBookCygenetal} for an introduction to parameterized complexity.) Our main result is that the problem is \classFPT when parameterized by both $k$ and $r$.

\begin{restatable}{theorem}{fptrankandk}
\label{thm:fpt}
\ftb can be solved in $(kr)^{\Oh(kr^4)}\cdot n^{\Oh(1)}$ time on $n$-element matroids of rank at most $r$ given by independence oracles.
\end{restatable}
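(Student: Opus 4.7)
The plan is to combine a structural upper bound on the optimum size with a dynamic-programming algorithm driven by matroid representative sets.

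\emph{Step 1 (Size bound).} First I would establish that the size of any minimum $k$-fault-tolerant basis $B^*$ is bounded by a polynomial in $k$ and $r$; concretely, I expect $|B^*| \le r(k+1)$. The starting point is the observation that a $k$-fault-tolerant basis exists if and only if every cocircuit of $\cM$ has size at least $k+1$, i.e.\ $|B\cap C^*|\ge k+1$ for every cocircuit $C^*$. Given this, one fixes a basis $B_0$ of $\cM$ and augments it with carefully chosen elements from $E\setminus B_0$ to cover each cocircuit at least $k+1$ times, using at most $rk$ extra elements in total. The main subtlety is that covering the $r$ fundamental cocircuits of $B_0$ does not by itself guarantee coverage of non-fundamental ones, so the choice of extra elements needs to be justified via a matroid-exchange argument on cocircuits.

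\emph{Step 2 (Algorithm).} With the size bound in place, the algorithm constructs $B^*$ element by element: for $i = 0, 1, \dots, r(k+1)$, it maintains a family $\cF_i$ of candidate partial solutions of size $i$ and extends $\cF_i$ to $\cF_{i+1}$ by adjoining each possible next element. To keep $|\cF_i|$ bounded by a function of $k$ and $r$ only, I would apply a matroid representative-set reduction in the spirit of Marx and of Fomin, Lokshtanov, Panolan, and Saurabh. The relevant auxiliary matroid is built from $\cM$ to capture the $k$-fault-tolerance requirement; one natural candidate is (a truncation of) the $(k+1)$-fold matroid union of $\cM$, whose independent sets are unions of $k+1$ independent sets of $\cM$. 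Using this auxiliary matroid, partial solutions are compressed while preserving the property ``extendable to some $k$-fault-tolerant basis''.

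The main obstacle is that $k$-fault-tolerance is a covering condition on cocircuits rather than a matroid-independence condition; the example of $U_{r,\,r+k}$ shows that an optimal $B^*$ need not contain $k+1$ disjoint bases of $\cM$, so the representative-set theorem cannot be applied to the matroid union ``out of the box''. Bridging this gap calls for an additional construction together with a tailored exchange argument showing that the compressed family still contains a partial solution extendable to an optimal $B^*$ whenever one exists. Once this is in place, the representative families have size bounded by roughly $(kr)^{O(kr^2)}$ and each compression step runs in $(kr)^{O(kr^3)}\cdot n^{O(1)}$ time; iterating over the $O((k+1)r)$ stages then yields the claimed $(kr)^{O(kr^4)}\cdot n^{O(1)}$ running time.
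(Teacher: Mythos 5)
Your proposal is not a complete proof: it identifies the correct size bound but the algorithmic core is missing, and you yourself flag the exact spot where it breaks.

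Step~1 is essentially right. The bound $|B^*|\le (k+1)r$ matches \Cref{prop:bounds} in the paper, and the cocircuit reformulation ($B$ is $k$-fault-tolerant iff $|B\cap C^*|\ge k+1$ for every cocircuit $C^*$) is a valid alternative view. However, the argument you sketch---fix a basis $B_0$ and add at most $rk$ elements to cover cocircuits---is not finished: as you note, covering fundamental cocircuits of $B_0$ does not cover all cocircuits, and the ``matroid-exchange argument'' needed here is precisely a proof you would have to supply. The paper avoids this by a direct shrinking argument: given any $k$-fault-tolerant basis $B$ with $|B|>(k+1)r$, peel off $k+1$ successive inclusion-maximal independent sets inside $B$; their union is smaller, and every failure set misses one layer, which by construction still spans. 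That argument is shorter and does not pass through cocircuits.

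Step~2 is where the proposal fails. You propose dynamic programming over partial solutions compressed by representative families computed in (a truncation of) the $(k+1)$-fold matroid union of $\cM$. But as you correctly observe with $U_{r,\,r+k}$, a $k$-fault-tolerant basis is \emph{not} in general a union of $k+1$ bases, so fault-tolerant bases are not bases of the matroid union and need not even be independent there in a useful quantitative sense. More fundamentally, ``extendable to some $k$-fault-tolerant basis'' is a global covering condition on cocircuits, not a condition of the form ``independent in a fixed matroid and disjoint from the rest of the solution,'' which is the structural premise that makes representative-set compression sound. You explicitly say this requires ``an additional construction together with a tailored exchange argument,'' but no such construction is given, and it is not a routine gap: there is no auxiliary matroid in evidence whose independence captures fault-tolerance, and without one the representative-set machinery does not apply. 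Thus the $(kr)^{O(kr^2)}$ family bound and the $(kr)^{O(kr^3)}$ compression time are asserted, not derived.

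For contrast, the paper's route is entirely different and avoids representative sets. The key object is an \emph{$h$-uniform} set (a rank-$h$ set all of whose $h$-element subsets are independent). \Cref{lem:uniform} shows that if $X$ is a sufficiently large $h$-uniform set, any $k$-fault-tolerant basis can be exchanged so that its intersection with $\cl(X)$ lies inside $X$; the proof is a local exchange argument counting how many ``bad failure sets'' an element of $X$ can repair. \Cref{lem:important} then builds a bounded \emph{important set} $W$ by a recursion: compute a maximal $h$-uniform set $Z\subseteq\cl(X)$; if $Z$ is large, $W$ restricted to $\cl(X)$ can be $Z$ itself; if $Z$ is small but maximal, then $\cl(X)$ decomposes as the union of closures of the $(h-1)$-subsets of $Z$, and the algorithm recurses on each. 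The recursion has depth at most $r$ and bounded branching, giving $|W|\le r^{r^2}[(k+1)r]^{r^3}$, after which brute force over subsets of $W$ of size at most $(k+1)r$ finishes the proof. If you want to salvage your DP/representative-set plan, you would need to exhibit a concrete matroid whose independent sets encode fault-tolerant partial solutions and prove the extendability invariant; absent that, the proposal does not establish the theorem.
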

 
To prove \Cref{thm:fpt}, we design an algorithm that recursively decomposes the input matroid and identifies a small \emph{core set} containing any possible $k$-fault-tolerant basis, thereby restricting the problem to a bounded search space. More precisely, the algorithm locates a set $W$ of \emph{important elements} with bounded size such that, if the input matroid $\cM$ admits a $k$-fault-tolerant basis, then there is one contained entirely in $W$.
The construction of $W$ is guided by the observation (see~\Cref{obs:minbase}) that if $X$ is a rank-$r$ set with $r+k$ elements such that \emph{every}~$r$-element subset of $X$ is independent, then $X$ is already a $k$-fault-tolerant basis. Hence, finding such an $X$ would immediately solve the problem. Otherwise, we identify an inclusion-maximal rank-$r$ set $X$ (of bounded size) such that for every $r$-element subset~$Y \subseteq X$, the rank of $Y$ is $r$. 
The crucial insight here is that the closure $\cl(X)$ of $X$ can be expressed as the union of $\cl(S)$ over all $(r-1)$-element subsets $S$ of $X$. We then proceed recursively, selecting elements of $W$ by searching for analogous sets in the closures of each $S$, applying the same approach used to choose $X$.
Once we have computed~$W$, we determine whether a $k$-fault-tolerant basis exists within $W$ by enumerating all candidate subsets via a brute-force algorithm. If such a basis is found, it is also a valid solution for the original matroid.

We then show that the result of  \Cref{thm:fpt} is tight. First, by adapting  the results of Fomin et al.~\cite{FominGLS18} for our purposes, we observe the following lower bound.

\begin{restatable}{proposition}{hardtodecide}
\label{prop:hardtodecide}
It is \classW{1}-hard for the parameterization by $k$ to decide whether a given linear matroid has a $k$-fault-tolerant basis. 
\end{restatable}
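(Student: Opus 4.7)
The natural first step is a structural characterization of when a $k$-fault-tolerant basis exists. I would prove the following equivalence: $\cM$ admits a $k$-fault-tolerant basis if and only if every cocircuit of $\cM$ has size at least $k+1$, i.e., if and only if the \emph{cogirth} of $\cM$ exceeds $k$. For the forward direction, if $B$ is a $k$-fault-tolerant basis and $C^{*} = E \setminus H$ is any cocircuit (with associated hyperplane $H$), then setting $F := B \cap C^{*}$ gives $B \setminus F \subseteq H$, which has rank $r-1 < r$; hence $|C^{*}| \geq |B \cap C^{*}| \geq k+1$. Conversely, if every cocircuit has size at least $k+1$, then $B = E$ trivially satisfies the fault-tolerance condition (any removal $F$ with $|F| \leq k$ still meets every cocircuit), and any inclusion-minimal such subset is a $k$-fault-tolerant basis. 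Consequently, the decision version of \ftb is equivalent to testing whether the cogirth of $\cM$ exceeds $k$.

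By matroid duality, the cogirth of $\cM$ equals the girth of the dual matroid $\cM^{*}$. Moreover, if $\cM$ is linear over some field $\mathbb{F}$, then so is $\cM^{*}$, with a representation computable in polynomial time. The question is therefore polynomially equivalent to: does a given linear matroid have girth strictly greater than $k$? My plan is then to adapt the W[1]-hardness construction of Fomin, Golovach, Lokshtanov, and Saurabh~\cite{FominGLS18} dealing with parameterized questions on linear matroids. The adaptation takes their reduction from a W[1]-hard source problem (for example $k$-\textsc{Clique} or \mcc) into a linear matroid, applies matroid duality, and pads with matroids of controlled (co)girth via direct sums, along with parallel extensions that replicate elements without altering the circuit structure. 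The goal is a linear matroid whose cogirth precisely records whether the source instance is a yes-instance.

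The main obstacle I anticipate is the \emph{direction} of the reduction. Classical hardness results around matroid girth (such as minimum distance of linear codes) are most naturally stated for the ``short dependent set exists'' question, whose direct translation only yields coW[1]-hardness of our ``all cocircuits are large'' problem. The adaptation of~\cite{FominGLS18} must therefore be arranged so that yes-instances of the source W[1]-hard problem correspond to matroids whose cogirth strictly exceeds the target threshold, while no-instances force the appearance of a short cocircuit. I expect this reversal to be implemented through enforcement gadgets in the dual representation together with sufficient padding so that no spurious short cocircuits appear outside the gadget encoding the source instance; the final parameter $k'$ should remain bounded by a function of $k$, so that the overall reduction is a parameterized reduction.
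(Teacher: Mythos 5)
Your cogirth characterization is correct and is, in essence, the same observation the paper uses: a matroid admits a $k$-fault-tolerant basis if and only if no set of at most $k$ elements drops the rank, i.e., the cogirth exceeds $k$. Your verification of this equivalence (via $B\cap C^{*}$ and via hyperplanes) is sound, as is the reformulation through duality and the fact that the dual of a linear matroid is linear with a representation computable in polynomial time.

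The gap is exactly where you anticipate it. You observe---correctly---that reducing from a ``short cocircuit / small-weight codeword exists'' problem yields only a reduction to the complement, and you then propose to repair the direction by adapting~\cite{FominGLS18} with unspecified ``enforcement gadgets'' and padding. No concrete construction is given, and it is unclear such a reversal is achievable: it would in effect reduce a $\classCoW{1}$ question to a $\classW{1}$ one, which is not known to be possible. This is the missing piece that leaves the proof incomplete. As a side caution, $\classW{1}$-hardness of minimum distance / matroid girth over $\mathrm{GF}(2)$ (the \textsc{Even Set} problem) was long open and is known only via a nontrivial chain of reductions; over non-binary fields it is easier, but you should be careful which source problem you pick.

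What you did not need to do was reverse the direction at all. The paper's own proof reduces from \textsc{Rank $h$-Reduction}, which Fomin et al.\ show $\classW{1}$-hard by a reduction from \textsc{Clique}. Given an instance $(\cM,h,k)$, the paper forms the $(r-h+1)$-truncation $\cM'$ (whose linear representation is computable in polynomial time by~\cite{LokshtanovMPS18}), and then applies precisely your cogirth characterization to get: $\cM'$ has \emph{no} $k$-fault-tolerant basis if and only if there is $X$ with $|X|\le k$ and $\rank(\cM)-\rank(\cM-X)\ge h$, i.e.\ the source is a yes-instance. This reduction maps yes-instances to no-instances of the target, which is the same ``unfavorable'' direction you were worried about; the paper simply accepts it (and indeed its surrounding remark explicitly draws the $\classCoNP$-hardness conclusion, the complement flavor). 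So your approach would close the gap if you likewise accept the complement-preserving reduction rather than trying to flip it: reduce a $\classW{1}$-hard short-cocircuit question (or \textsc{Rank $h$-Reduction} directly, via truncation, which is cleaner and avoids any worries about the hardness status of minimum distance) to the complement of ``$\cM$ has a $k$-fault-tolerant basis.'' As written, the gadget construction you sketch is the unresolved step.
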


For the parameterization of \ftb by $k$, recall that \ftb for graphic matroids is equivalent to finding a $(k+1)$-edge connected spanning subgraph with the minimum number of edges. As was shown by Fernandes~\cite{Fernandes98}, an $n$-vertex graph $G$ has a 2-edge connected spanning subgraph with at most $n$ edges if and only if $G$ has a Hamiltonian cycle. Thus, \ftb is intractable already for $k=1$.  Taking into account the inapproximability lower bounds for higher connectivities established by Gabow et al.~\cite{GabowGTW09}, we obtain the following observation. 

\begin{observation}\label{obs:hard-k}
   For every integer $k\geq 1$,
it is \classNP-hard to decide whether a graphic matroid~$\cM$ given with an integer $b\geq 1$ has a $k$-fault-tolerant basis of size at most $b$.
\end{observation}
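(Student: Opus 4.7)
The plan is to translate the graphic-matroid formulation of $k$-fault-tolerant basis back into edge-connectivity language and then invoke the two known hardness results cited right before the observation. The first step is to make explicit the equivalence that the authors have already hinted at: a set $B\subseteq E(G)$ of edges of a connected graph $G$ is a $k$-fault-tolerant basis of the graphic matroid $\cM(G)$ if and only if $B$ is the edge set of a spanning, $(k+1)$-edge-connected subgraph of $G$ with the minimum possible number of edges. Indeed, by definition of $\cM(G)$, $\rank(B\setminus F)=\rank(\cM(G))=|V(G)|-1$ precisely when $B\setminus F$ spans a connected subgraph on $V(G)$, and asking this to hold for every $F\subseteq B$ with $|F|\leq k$ is exactly the definition of $(k+1)$-edge-connectivity of the spanning subgraph $(V(G),B)$. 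In particular, $\cM(G)$ admits a $k$-fault-tolerant basis of size at most $b$ if and only if $G$ contains a $(k+1)$-edge-connected spanning subgraph with at most $b$ edges.

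Given this reduction, the remaining task is to invoke the known hardness of the latter decision problem for each fixed $k\geq 1$. For $k=1$, I would follow Fernandes~\cite{Fernandes98} verbatim: he shows that an $n$-vertex graph $G$ admits a $2$-edge-connected spanning subgraph with at most $n$ edges if and only if $G$ is Hamiltonian, because such a subgraph must be a Hamiltonian cycle. Hence a polynomial reduction from \textsc{Hamiltonian Cycle} to \ftb on graphic matroids is obtained by outputting $(\cM(G),b)$ with $b=n$, proving \classNP-hardness of the $k=1$ case.

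For $k\geq 2$, I would invoke the inapproximability results of Gabow, Goemans, Tardos and Williamson~\cite{GabowGTW09}, who prove that \textsc{Minimum $\ell$-Edge-Connected Spanning Subgraph} is APX-hard for every fixed $\ell\geq 2$ (this is stated for simple graphs; the lower bounds transfer to our setting since we may assume $G$ is simple). APX-hardness of a minimization problem immediately implies \classNP-hardness of its natural threshold decision version ``is there a solution of value at most $b$?'': if it were polynomial-time decidable, one could determine the optimum by binary search over $b\in\{1,\ldots,|E(G)|\}$ in polynomial time and in particular distinguish the two sides of an APX gap. Setting $\ell=k+1\geq 3$ and using the equivalence from the first paragraph gives \classNP-hardness of \ftb on graphic matroids for every fixed $k\geq 2$.

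The only mildly subtle point is making the case $k\geq 2$ fully rigorous: I would want to double-check that Gabow et al.'s hardness already holds in the decision version for each individual fixed $\ell$ (rather than only asymptotically), which should be clear by inspection of their gap-producing reduction, since that reduction produces instances with two well-separated target values of $b$. Everything else is essentially bookkeeping, and the observation follows by combining the two cases.
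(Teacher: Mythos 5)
Your argument is correct and follows the same route the paper sketches in the paragraph preceding the observation: translate a $k$-fault-tolerant basis of $\cM(G)$ into a minimum-size $(k+1)$-edge-connected spanning subgraph of $G$, invoke Fernandes' Hamiltonian-cycle characterization for $k=1$, and invoke the inapproximability results of Gabow, Goemans, Tardos and Williamson for $k\geq 2$. You add slightly more care than the paper in spelling out why APX-hardness of the optimization problem yields \classNP-hardness of the threshold decision version, but the structure of the argument is the same.
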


For the parameterization by the rank, we establish a dichotomy---for any fixed $r\geq 3$, \ftb is intractable, but for $r\leq 2$, the problem can be solved in polynomial time. In fact, for $r\leq 2$, we solve 
the more general weighted variant of \ftb. In \wftb, we are given a matroid $\cM$ together with a weight function $w\colon E(\cM)\rightarrow \mathbb{Z}_{\geq 0}$, and the task is to find a set $B\subseteq E(\cM)$ of minimum total weight such that for any set $F\subseteq B$ of size at most $k$, $\rank(B\setminus F)=\rank(\cM)$. Our result is summarized in the following theorem.

\begin{restatable}{theorem}{hardness}
\label{thm:hard-r}
For any integer $r\geq 3$, it is \classNP-hard to decide whether a linear matroid $\cM$ of rank $r$ over rationals given together with integers $k$ and $b$ has a $k$-fault-tolerant basis of size at most $b$. For 
$r\leq 2$, \wftb can be solved in $\Oh(n^2)$ on matroids  
given by an independence oracle.
\end{restatable}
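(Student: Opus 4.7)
The statement has two parts, handled separately: \classNP-hardness for every fixed $r\ge 3$, and a polynomial algorithm for \wftb when $r\le 2$.

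For the hardness direction, I would first handle the base case $r=3$ by a reduction from an \classNP-hard covering problem, and then lift to arbitrary $r\ge 3$ by padding. The padding direct-sums the rank-$3$ instance $\cM_3$ with a rank-$(r-3)$ matroid consisting of $r-3$ parallel classes of $k+1$ elements each, realized via vectors linearly independent of the ambient $\mathbb{Q}^3$ so that the whole matroid remains $\mathbb{Q}$-linear of rank~$r$. The hyperplanes of the combined matroid split cleanly into those of $\cM_3$ (augmented with the whole padding) and those obtained by removing one padding parallel class. A short cocircuit analysis then shows that every $k$-fault-tolerant basis must contain all of the padding elements, so the minimum size grows by exactly $(r-3)(k+1)$ and the hardness is preserved. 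For the base case, I identify a rank-$3$ $\mathbb{Q}$-linear matroid with a multiset of points in $\mathbb{P}^2(\mathbb{Q})$; hyperplanes are then exactly the projective lines, and $B$ is $k$-fault-tolerant iff every line leaves at least $k+1$ elements of $B$ outside of it. I would reduce from \textsc{Set Cover} (or a similar covering problem), encoding each covering constraint by a line and each candidate element by a point, using the fact that $\mathbb{Q}$ is infinite to place auxiliary points so that only the intended triples are collinear (a genericity/perturbation argument).

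For the polynomial algorithm, the cases $r\le 1$ are immediate: if $r=0$ then $B=\emptyset$ is the unique basis and works iff $k=0$; if $r=1$ all non-loop elements are parallel, so $B$ is $k$-fault-tolerant iff it contains at least $k+1$ non-loops, and the optimum is the $k+1$ cheapest non-loops. For $r=2$, the hyperplanes are exactly the parallel classes $P_1,\dots,P_t$ (together with the loops, which we discard), so a loop-free $B$ is $k$-fault-tolerant iff $|B\cap P_i|\le |B|-k-1$ for every~$i$, and this inequality already forces $\rank(B)=2$. Fixing the target size $s=|B|$, the problem becomes a partition-matroid optimization with per-class capacity $s-k-1$, solved by a single global sort of the elements by weight and a greedy pass that picks the cheapest $s$ elements respecting the caps. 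Iterating $s\in\{2(k+1),\dots,n\}$ and returning the best feasible choice gives the claimed $\Oh(n^2)$ time; the bound is dominated by the $\Oh(n^2)$ oracle queries needed to identify loops and parallel classes.

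The main obstacle is the rank-$3$ reduction. Unlike arbitrary set systems, the lines of a $\mathbb{Q}$-linear rank-$3$ matroid must obey projective incidence axioms (any two points lie on a unique line, three non-collinear points are independent, etc.), so the gadget has to be designed carefully so that exactly the intended triples are collinear and so that the minimum $k$-fault-tolerant basis of the resulting point configuration encodes an optimum of the source problem. Once this geometric construction is in place, both the padding to $r>3$ and the low-rank algorithm are routine.
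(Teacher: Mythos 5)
Your handling of $r \le 2$ (reduce to a partition matroid on parallel classes, sweep over the target size $s$ and greedily fill each class up to its cap $s-k-1$) is essentially the same as the paper's argument via \Cref{thm:poly-part} and \Cref{cor:poly-rank}, and the padding from $r=3$ to general $r\ge 3$ by appending $(r-3)$ parallel classes of $k+1$ vectors each is also exactly what the paper does. So the only real point of comparison is the base case $r=3$, and there your proposal has a genuine gap.

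The paper reduces from \textsc{General Position Subset Selection}: given rational points in the plane and an integer $p$, decide whether some $p$ of them are in general position (Froese et al.~\cite{FroeseKNN17}). Homogenizing the points to vectors $(x_i,y_i,1)^\top$ gives a rank-$3$ matroid over $\mathbb{Q}$ in which ``three points not collinear'' is literally ``rank $3$'', so a $p$-subset in general position is the same as a $3$-uniform set of size $p$. Combined with \Cref{obs:minbase} and the lower bound from \Cref{prop:bounds}, this means the instance has a $k$-fault-tolerant basis of size $b=p=k+3$ iff the source instance is a yes-instance; no new geometric construction is needed, because the hard realizable point set is already given by the source problem.

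Your version instead wants to reduce from \textsc{Set Cover}, encoding covering constraints as lines and candidates as points, and then appeals to a ``genericity/perturbation argument using the fact that $\mathbb{Q}$ is infinite'' to realize exactly the intended collinearities. This is where the argument breaks. Genericity makes triples \emph{non}-collinear; forcing a prescribed family of triples to \emph{be} collinear while all others are not is the rank-$3$ matroid/point-configuration realizability problem, which is precisely the obstacle you identify as ``the main obstacle'' and which is not resolved by perturbation. Moreover, the projective incidence axioms you mention are a real constraint: two distinct lines of a rank-$3$ linear matroid share at most one point, so two set-cover constraints with two or more common candidates cannot both be encoded as lines, and an arbitrary set-cover instance therefore has no direct realization in $\mathbb{P}^2(\mathbb{Q})$. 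One would need a gadget construction that respects these axioms \emph{and} is realizable over $\mathbb{Q}$ with polynomial-size coordinates, which is essentially the content of the Froese et al. hardness proof that the paper reuses. As written, your base case is a plan rather than a proof; once you swap in the reduction from \textsc{General Position Subset Selection} (or carry out an equivalent explicit rational point construction), the rest of your argument goes through.
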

In particular, one can find a $k$-fault-tolerant basis of vectors in $\mathbb{R}^2$ in polynomial time but the problem becomes \classNP-hard  in $\mathbb{R}^3$.
To obtain the hardness for $r\geq 3$, we use the result of Froese et al.~\cite{FroeseKNN17} stating that, given a set~$P$ of points on the plane and a positive integer $k$, it is NP-hard to decide whether $P$ contains at least~$k$ points in general position. To establish the claim for $r\leq 2$, it is convenient to show 
a more general result for partition matroids which may be of independent interest.

\begin{restatable}{proposition}{partition}
\label{thm:poly-part}
Given an $n$-element partition matroid~$\cM$ with unit capacities together with a weight function $w\colon E(\cM)\rightarrow\mathbb{Z}_{\geq 0}$ and integers $r,k\geq 0$, 
one can find in~$\Oh(n^2)$~time either a set $X\subseteq E(\cM)$ of minimum weight  
such that, for any set $F\subseteq X$ of size at most $k$, it holds that~$\rank(X\setminus F)\geq r$ or correctly decide that such a set does not exist.   
\end{restatable}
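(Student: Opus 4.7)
The plan is to first characterize feasibility combinatorially, then to argue that some optimum has a very structured shape, and finally to reduce the problem to a family of capped minimum-weight selection problems that together take $\Oh(n^2)$ time.

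Let $E_1,\dots,E_\ell$ be the parts of $\cM$ and, for a candidate $X\subseteq E(\cM)$, write $q_i=|X\cap E_i|$. Since the rank of any subset of a partition matroid with unit capacities equals the number of parts it meets, an adversary removing at most $k$ elements would greedily destroy the parts with the smallest $q_i$'s. Hence $X$ is feasible iff $t:=|\{i:q_i>0\}|\geq r$ and the sum of the smallest $t-r+1$ positive counts is at least $k+1$; equivalently, $|X|$ minus the sum of the top $r-1$ counts is at least $k+1$. Within each part $E_i$ it is always optimal to take the $q_i$ cheapest elements, and I write $C_i(q)$ for the sum of the $q$ smallest weights in $E_i$.

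The key structural claim is that every feasible $X$ can be transformed, without increasing its weight, into a feasible $X'$ of a \emph{balanced} shape: with $m:=\max_i q'_i$, every count satisfies $q'_i\leq m$, all top $r-1$ counts equal $m$, and $|X'|=(r-1)m+k+1=:N(m)$. The transformation is a single-element removal loop: while the slack $|X|-(\text{sum of top }r-1\text{ counts})$ strictly exceeds $k+1$, or while some top part has count strictly greater than the minimum top count, remove a single appropriate element (a non-top element when the slack is strict; otherwise an element from a top part whose count strictly exceeds the minimum top count). A short case analysis of how the ``top $r-1$'' set evolves under removals shows that each removal preserves feasibility. Since $|X|$ strictly decreases, the process terminates, and at the terminal configuration the slack is exactly $k+1$ with all top counts coinciding, forcing $|X'|=N(m)$.

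Given this reduction, the algorithm enumerates $m\in\{1,\dots,k+1\}$ and, for each $m$, computes the set $X_m$ of minimum weight with $|X_m|=N(m)$ and $q_i\leq m$ for all $i$. This $X_m$ is obtained by scanning all elements in ascending order of weight (sorted once in preprocessing) and greedily including each unless doing so would exceed the per-part cap $m$, stopping when $N(m)$ elements have been taken; if the eligible pool is exhausted first, level $m$ is infeasible. The family $\{Y:|Y|\leq N(m),\ |Y\cap E_i|\leq m\}$ is a matroid (a truncation of a partition matroid), so this greedy returns a minimum-weight base of size $N(m)$. Feasibility of $X_m$ is automatic: the per-part cap forces the sum of its top $r-1$ counts to be at most $(r-1)m=N(m)-(k+1)$. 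The final answer is the minimum of $w(X_m)$ over feasible $m$, or infeasibility if no level succeeds. The global sort costs $\Oh(n\log n)$; one may assume $k\leq n-1$ (otherwise any $X$ has $|X|\leq n\leq k$, forcing infeasibility), so the main loop runs $\Oh(n)$ iterations of $\Oh(n)$ work each, yielding the $\Oh(n^2)$ bound. The principal obstacle is justifying the balance/termination claim of the reduction: ties among counts cause the top-$r-1$ set to reshuffle under single-element removals, and one must verify that an appropriate removal is always available until the balanced shape is reached.
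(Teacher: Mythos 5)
Your overall plan — characterize feasibility via the slack condition, reduce to solutions of size $(r-1)m+k+1$ obeying a per-part cap $m$, and run a matroid greedy for each $m$ — is exactly the approach the paper takes; the structural claim you need is the paper's Lemma~\ref{lem:partition}. The genuine gap, which you flag yourself, is that the ``removal loop'' argument for reaching the capped shape is not carried out, and tracking how the top-$(r-1)$ set reshuffles under single-element deletions (especially with ties) is the whole difficulty. The paper avoids this iterative transformation entirely. It takes an inclusion-minimal feasible~$X$ (a minimum-weight solution may be assumed inclusion-minimal, as weights are nonnegative), sorts the counts $q_1\ge\cdots\ge q_d$, sets $s=q_{r-1}$, observes that $\sum_{i\ge r}q_i\ge k+1$ (deleting $\bigcup_{i\ge r}(X\cap P_i)$ would otherwise drop the rank below $r$ using at most~$k$ deletions), and then exhibits a feasible subset $X'\subseteq X$ with exactly $s$ elements in the top $r-1$ parts and exactly $k+1$ elements elsewhere; inclusion-minimality forces $X=X'$. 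This one-shot argument is what your removal loop is trying to reproduce, and you should adopt it rather than case-analyze the loop.

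Two smaller remarks. The ``balance'' requirement that all top $r-1$ counts equal $m$ is not needed for the algorithm — the greedy only uses $|X|=(r-1)m+k+1$ together with the cap $|X\cap P_i|\le m$ — so insisting on it only makes the transformation harder than necessary. And your restriction to $m\in\{1,\dots,k+1\}$ is actually true (for an inclusion-minimal feasible~$X$ with $d\ge r$ one has $q_r=s$, so $s\le\sum_{i\ge r}q_i=k+1$) but you do not prove it; the paper simply enumerates $s$ up to $\max_i|P_i|\le n$, which already gives $\Oh(n)$ iterations and needs no extra lemma.
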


\subsection{Related Work}
Adjiashvili, Stiller, and Zenklusen~\cite{Adjiashvili2015bulk} introduced a 
\textit{Bulk-Robustness} model of combinatorial optimization. They studied several 
instances of this framework, including the \textsc{Bulk-Robust Minimum Matroid Basis} 
problem, which is most relevant to our work.
In this setting, for a matroid $\cM = (E, \cI)$, we are given a collection of 
\emph{interdiction sets}
$
  \Omega \;=\; \{F_1, F_2, \ldots, F_m\},
$
where~$F_i \subseteq E(\cM)$ for each~$i \in \{1,\ldots,m\}$. The goal is to find a cheapest subset $X \subseteq  E(\cM)$
such that $X \setminus F_i$ contains a basis of $\cM$ for every 
$i \in \{1, \dots, m\}$.
Adjiashvili, Stiller, and Zenklusen provided an $\Oh(\log m + \log r)$-approximation
algorithm for \textsc{Bulk-Robust Minimum Matroid Basis}, where $r$ is the rank of the matroid.
The problem  {\ftb} can be viewed as a special case of \textsc{Bulk-Robust Minimum Matroid Basis}
when $\Omega$ consists of all subsets of~$E$ of size $k$.

There are several prior works investigating fault-tolerance for classic optimization problems. 
The model of \textsc{$s$-$t$-Path} and \textsc{$s$-$t$-Flow} problems with safe and vulnerable edges was introduced by Adjiashvili et al.~\cite{AHMS22}, who studied the approximability of these problems. Subsequently, generalizations were also studied \cite{BCGI22,BCHI21,CJ22}. Bentert et al.~\cite{BentertSS23}  studied the parameterized complexity of computing a fault-tolerant spanning tree. Approximation algorithms and inapproximability lower bounds for 
\textsc{$k$-Edge-Connected Spanning Subgraph} have been considered in~\cite{ChalermsookHNSS22,Fernandes98,GabowGTW09,KhullerV92}. 

More generally, \textsc{\ftb} belongs to \emph{robust optimization}, a branch of optimization
that adapts classic optimization-theoretic tools to settings with uncertainty.
For a comprehensive introduction, see the survey by Bertsimas et al.~\cite{Bertsimas2011robust}.

\section{Preliminaries}\label{sec:prelim}
We refer to the book of Oxley~\cite{Oxley92} for a detailed
introduction to matroids. A pair $\cM=(E,\cI)$, where $E$ is a finite
\emph{ground} set and $\cI$ is a family of subsets of the ground set,
called \emph{independent sets} of $\cM$, is a \emph{matroid} if it satisfies the
following conditions, called \emph{independence axioms}:
\begin{description}
\item[(I1)]  $\emptyset \in \cI$. 
\item[(I2)]  If $A\subseteq B\subseteq E$ and $B\in \cI$ then $A\in\cI$. 
\item[(I3)] If $A, B  \in \cI$  and $ |A| < |B| $, then there is $ e \in  B \setminus A$  such that $A\cup\{e\} \in \cI$.
\end{description}
We use $E(\cM)$ and $\mathcal{I}(\cM)$ to denote the ground set and the set of
independent sets, respectively. 
An inclusion-maximal independent set $B$ is called a
\emph{basis} of $\cM$. We use $\cB(\cM)$ to denote the set of bases of $\cM$.
All the bases of $\cM$ have the same size called the \emph{rank} of $\cM$ and 
denoted by~$\rank(\cM)$. The \emph{rank} of a subset $A\subseteq E(\cM)$, denoted by~$\rank(A)$, is the maximum size of an independent set $X\subseteq A$; the
function $\rank\colon 2^{E(\cM)}\rightarrow \mathbb{Z}$ is the \emph{rank}
function of~$M$. A set~$A\subseteq E(\cM)$ \emph{spans} an element $x\in E(\cM)$ if
$\rank(A\cup\{x\})=\rank(A)$. The \emph{closure} (or \emph{span}) of $A$ is the
set $\cl(A)=\{x\in E(M)\mid A\text{ spans }x\}$. Closures satisfy the following
properties, called \emph{closure axioms}:
\begin{description}
\item[(CL1)]  For every $A\subseteq E(\cM)$, $A\subseteq \cl(A)$. 
\item[(CL2)]  If $A\subseteq B\subseteq E(\cM)$, then $\cl(A)\subseteq \cl(B)$.
\item[(CL3)]  For every $A\subseteq E(\cM)$, $\cl(A)=\cl(\cl(A))$.
\item[(CL4)]  For every $A\subseteq E(\cM)$, every $x\in E(\cM) \setminus A$, and every~$y\in\cl(A\cup\{x\})\setminus \cl(A)$, $x\in \cl(A\cup\{y\})$. 
\end{description}

For $S\subseteq E(M)$, the matroid obtained by \emph{deleting} $S$, denoted by $\cM'=\cM-S$, is the matroid such that
$E(\cM')=E(\cM)\setminus S$ and $\cI(\cM')=\{X\in \cI(\cM)\mid
S\cap X=\emptyset\}$.
Let $r\geq 0$ be an integer. The \emph{$r$-truncation} of a matroid $\cM$ is the matroid $\cM'$ with $E(\cM')=E(\cM)$ such that $X\in \cI(\cM')$ if and only if $X\in \cI(\cM)$ and $|X|\leq r$.
A matroid $\cM$ is a \emph{partition} matroid if there is a partition~$(P_1,\ldots,P_d)$ of the ground set and a $d$-tuple $(c_1,\ldots,c_d)$ of positive integers, called capacities, such that 
a set $X\subseteq E(G)$ is independent if and only if~$|X\cap P_i|\leq c_i$ for every~$i\in\{1,\ldots,d\}$. 
A $d\times n$-matrix
$\bfA$ over a field~$\mathbb{F}$ is a \emph{representation of a matroid $\cM$ over~$\mathbb{F}$} if
there is a one-to-one correspondence $f$ between $E(\cM)$ and the columns of~$\bfA$ such that for any~$X\subseteq E(\cM)$, it holds that~$X\in \cI(\cM)$ if and only if
the columns in~$\{f(x) \mid x \in X\}$ are linearly independent (as vectors of $\mathbb{F}^d$). If
$\cM$ has such a representation, then~$\cM$ has a
\emph{representation over $\mathbb{F}$}. A matroid $\cM$ admitting a representation over $\mathbb{F}$ is said to be \emph{linear}; a matroid is \emph{binary} if it has a representation over ${\sf GF}[2]$.   

In our algorithms for general matroids,  we assume
that input matroids are given by \emph{independence oracles}. Such an oracle for a matroid $\cM$, takes as input a set $X\subseteq E(\cM)$ and correctly answers in a constant time whether $X\in \cI(\cM)$. We note that some matroids could be given explicitly, for example, by their representations. 

\section{Basic Observations}
In this section, we prove a couple of simple results that will be helpful in the later discussion.
Recall that for an integer $k\geq 0$, $B\subseteq E(M)$ is a $k$-fault-tolerant basis of a matroid $\cM$ if $B$ is a subset of minimum cardinality such that $\rank(B\setminus F)=\rank(\cM)$ for every subset~$F\subseteq B$ of size at most $k$. We use the following bounds on the size of a $k$-fault-tolerant basis.

\begin{proposition}\label{prop:bounds}
Let $\cM$ be a matroid with $\rank(\cM)\geq 1$ and let $k\geq 0$ be an integer.  Then for a $k$-fault-tolerant basis $B$ of $\cM$, it holds that
\begin{equation*}
\rank(\cM)+k\leq |B|\leq (k+1)\rank(\cM).    
\end{equation*}
\end{proposition}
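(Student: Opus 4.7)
The plan is to prove the two bounds separately. For the lower bound $|B|\geq \rank(\cM)+k$, I would first argue $|B|>k$ (otherwise taking $F=B$ in the fault-tolerance condition yields $\rank(\emptyset)=0<\rank(\cM)$, a contradiction), then pick any $F\subseteq B$ with $|F|=k$; the condition $\rank(B\setminus F)=\rank(\cM)$ immediately gives $\rank(\cM)\leq|B\setminus F|=|B|-k$.

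For the upper bound $|B|\leq (k+1)\rank(\cM)$, my strategy is to use the minimality of $B$ to cover $B$ by a bounded number of short cocircuits of the restriction $\cM|_B$. The case $k=0$ is trivial (then $B$ is just a basis), so assume $k\geq 1$. Fault-tolerance of $B$ means that every cocircuit of $\cM|_B$ has size at least $k+1$. Using minimality, I would first show that each element $e\in B$ lies in some cocircuit of $\cM|_B$ of size \emph{exactly} $k+1$: since $B\setminus\{e\}$ fails to be $k$-fault-tolerant, there exists $F''\subseteq B\setminus\{e\}$ with $|F''|\leq k$ and $\rank(B\setminus(F''\cup\{e\}))<\rank(\cM)$, placing $B\setminus(F''\cup\{e\})$ inside a hyperplane $H$ of $\cM|_B$; combining $|B\setminus H|\geq k+1$ with $|F''\cup\{e\}|\leq k+1$ then forces $B\setminus H=F''\cup\{e\}$, which is exactly a size-$(k+1)$ cocircuit through $e$.

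From this family of tight cocircuits covering $B$, I would extract an inclusion-minimal subfamily $C_1^*,\dots,C_m^*$ whose union is still $B$. Minimality yields a ``private'' element $e_i\in C_i^*\setminus\bigcup_{j\neq i}C_j^*$ for each $i$, and the central claim is that $\{e_1,\dots,e_m\}$ is independent in $\cM|_B$. Granting this, $m\leq\rank(\cM|_B)=\rank(\cM)$ and
\[
|B|\;=\;\Big|\bigcup_{i=1}^{m}C_i^*\Big|\;\leq\;\sum_{i=1}^{m}|C_i^*|\;=\;m(k+1)\;\leq\;(k+1)\rank(\cM).
\]
The independence will follow from the standard orthogonality between circuits and cocircuits: any circuit $D\subseteq\{e_1,\dots,e_m\}$ would satisfy $D\cap C_i^*\subseteq\{e_i\}$ for each $i$, so $|D\cap C_i^*|\in\{0,1\}$, and since orthogonality forbids an intersection of size exactly one, we would get $e_i\notin D$ for every $i$, contradicting $D\subseteq\{e_1,\dots,e_m\}$. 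This independence step is the main obstacle, as it is the only part relying on a nontrivial matroid fact; the remaining ingredients---translating fault-tolerance into a cogirth bound, extracting tight cocircuits from minimality, and the final counting---are all straightforward.
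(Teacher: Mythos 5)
Your proof is correct but takes a genuinely different route from the paper for the upper bound. Both proofs use the minimality of $B$, but in different ways. The paper argues by contradiction with an elementary layering argument: assuming $|B|>(k+1)\rank(\cM)$, it greedily peels off $k+1$ pairwise-disjoint maximal independent subsets $X_0,\dots,X_k$ of $B$, sets $B'=\bigcup_i X_i$ (so $|B'|\le (k+1)\rank(\cM)<|B|$), and uses the pigeonhole principle to show that for any $F$ with $|F|\le k$ some $X_i$ is untouched and already spans $B\setminus B'$, hence $\cl(B'\setminus F)\supseteq\cl(B\setminus F)=E(\cM)$; this makes $B'$ a smaller fault-tolerant set, contradicting minimality. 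You instead work in the restriction $\cM|_B$ and translate fault-tolerance into a cogirth bound, then use minimality pointwise to put every element of $B$ in a cocircuit of size exactly $k+1$, extract an inclusion-minimal subcover, and bound its size by $\rank(\cM)$ via the private-element trick and circuit--cocircuit orthogonality. Both are valid; the paper's version needs nothing beyond rank and closure axioms and is shorter, while yours is structurally more transparent -- it exhibits a cover of $B$ by at most $\rank(\cM)$ tight cocircuits, explaining \emph{why} the bound $(k+1)\rank(\cM)$ arises -- at the price of invoking the (standard but less elementary) orthogonality property. Your lower-bound argument is just a spelled-out version of what the paper calls immediate from the definition.
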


\begin{proof}
The lower bound immediately follows from the definition of a $k$-fault-tolerant basis.  
To show the upper bound, assume for the sake of contradiction that~$|B|>(k+1)\rank(\cM)$. 
We iteratively construct sets~$X_0,\ldots,X_k$ where~$X_0\subseteq B$ is an inclusion-maximal independent set in~$B$, and for each~$i\in\{1,\ldots,k\}$, $X_i$~is an inclusion-maximal independent set in~${B\setminus\big(\bigcup_{j=0}^{i-1} X_j\big)}$. 
As~$\rank(B)=\rank(\cM)$ and $|B|>(k+1)\rank(\cM)$, such sets~$X_0,\ldots,X_k$ exist. 
Set~${B'=\bigcup_{i=0}^kX_i}$. 
Consider an arbitrary $F\subseteq B$ of size at most $k$. 
By the pigeonhole principle, there is an index~${i\in\{0,\ldots,k\}}$ such that $X_i\cap F=\emptyset$. By construction, $B\setminus B'\subseteq B\setminus\big(\bigcup_{j=0}^{i-1} X_j\big)\subseteq \cl(X_i)$. 
Hence, $B\setminus F\subseteq (B'\setminus F)\cup (B\setminus B')\subseteq \cl(B'\setminus F)$. Because $B$ is a $k$-fault-tolerant basis, $\cl(B\setminus F)=E(\cM)$. Thus, 
$E(\cM)=\cl(B\setminus F)\subseteq \cl(\cl(B'\setminus F))=\cl(B'\setminus F)$. 
Since $F$ was chosen arbitrarily, we obtain that $B'$ is a $k$-fault-tolerant basis of $\cM$. 
However, $|B'|<|B|$ contradicting the choice of $B$.
\end{proof}

We next argue that the bounds in \Cref{prop:bounds} are tight.
First, note that the lower bound is tight as the matroid $\cI(\cM)=\{X\subseteq E(\cM)\colon |X|\leq k\}$ proves. 
We next argue that the upper bound is also tight. 
Let $e_1,\ldots,e_r$ be a basis of $\mathbb{R}^r$. Let $k$ and $n$ be integers such that~$0\leq k<n$. Consider the linear matroid $\cM$ with~${E(\cM)=\{je_i\mid 1\leq i\leq r\text{  and }1\leq j\leq n\}}$. It is easy to see that any $k$-fault-tolerant basis of~$\cM$ has to contain at least $k+1$ vectors of~$\{je_i\mid 1\leq j\leq n\}$ for each $i\in\{1,\ldots,r\}$. Thus, the size of a $k$-fault-tolerant basis is at least $(k+1)r$. 
From the other side, we have that, for~$B=\{je_i\mid 1\leq i\leq r\text{ and }1\leq j\leq k+1\}$, 
$\rank(B\setminus F)=r$ for any~$F\subseteq E(\cM)$ of size at most $k$. Thus, $B$ is
a $k$-fault-tolerant basis of size~$(k+1)r$ of~$\cM$.

Let $\cM$ be a matroid.
For a positive integer $h$, we say that a set $X\subseteq E(\cM)$ is \emph{$h$-uniform} if~$\rank(X)= h $ and~$\rank(Y)=h$ for every subset $Y\subseteq X$ of size $h$. The definition and \Cref{prop:bounds} yield the following.

\begin{observation}\label{obs:minbase}
Let $\cM$ be a matroid of rank $r\geq 1$, $k\geq 0$ be an integer, and~$B\subseteq E(\cM)$ be a set of size $k+r$. Then, $B$ is a $k$-fault-tolerant basis of $\cM$ if and only if $B$ is $r$-uniform.
\end{observation}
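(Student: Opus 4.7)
The plan is to prove both directions of the equivalence directly from the definitions, using \Cref{prop:bounds} only to handle the minimality requirement in the definition of a $k$-fault-tolerant basis.

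For the forward direction, suppose $B$ is a $k$-fault-tolerant basis of $\cM$ with $|B| = k + r$. First, applying the spanning condition with $F = \emptyset$ yields $\rank(B) = r$. Next, given any $r$-element subset $Y \subseteq B$, set $F = B \setminus Y$. Then $|F| = (k+r) - r = k$, so the $k$-fault-tolerant basis property gives $\rank(Y) = \rank(B \setminus F) = r$. Since $Y$ was an arbitrary $r$-subset, this proves that $B$ is $r$-uniform.

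For the backward direction, suppose $B$ is $r$-uniform with $|B| = k + r$. I first verify that $B$ satisfies the spanning condition: for any $F \subseteq B$ with $|F| \le k$, we have $|B \setminus F| \ge r$, so $B \setminus F$ contains some $r$-element subset $Y$. By $r$-uniformity, $\rank(Y) = r$, and by monotonicity of rank together with $\rank(\cM) = r$, we conclude $\rank(B \setminus F) = r$. It remains to establish that $B$ has minimum cardinality among all sets with this spanning property, and this is exactly where \Cref{prop:bounds} applies: its lower bound shows that any set realizing the $k$-fault-tolerant spanning condition has size at least $r + k$, which matches $|B|$ exactly, so $B$ is indeed a $k$-fault-tolerant basis.

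I do not anticipate any significant obstacle; the statement is a tight-threshold characterization, and both directions follow by choosing $F$ judiciously (namely $F = \emptyset$ and $F = B \setminus Y$ for the forward direction, and arbitrary small $F$ for the backward direction). The only subtle point is that the definition of $k$-fault-tolerant basis bakes in a minimality condition, so the backward direction is not purely syntactic; invoking the lower bound $|B| \ge r + k$ from \Cref{prop:bounds} closes that gap cleanly.
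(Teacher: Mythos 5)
Your proof is correct and follows essentially the same approach as the paper: in the forward direction you take $F = B \setminus Y$ for each $r$-subset $Y$, and in the backward direction you observe that $B \setminus F$ always contains an $r$-subset of rank $r$. You are slightly more careful than the paper's own proof in explicitly invoking the lower bound of \Cref{prop:bounds} to dispatch the minimality clause built into the definition of a $k$-fault-tolerant basis, which the paper leaves implicit.
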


\begin{proof}
If $B$ is a $k$-fault-tolerant basis, then${\rank(B\setminus F)=r}$ holds for every set~$F\subseteq B$ of size at most $k$.
Since $|B|=k+r$, this implies that, for every set $X\subseteq B$ of size $r$, ${\rank(X)=\rank(B)=r}$, that is, $B$ is $r$-uniform. 
For the opposite direction, if $B$ is $r$-uniform, then, for any $X\subseteq B$ of size~$r$, $\rank(X)=\rank(B)=\rank(\cM)$.
Since~$|B|=k+r$, it holds for any~$F\subseteq B$ of size at most~$k$ that~$\rank(B\setminus F)=\rank(\cM)$. This completes the proof.     
\end{proof}

We conclude this section by proving \Cref{prop:hardtodecide} using the results of Fomin et al.~\cite{FominGLS18}. In particular, they studied \textsc{Rank $h$-Reduction}.
Here, the input is a binary matroid $\cM$ given by its representation over ${\sf GF}[2]$ and two positive integers $h$ and $k$.
The task is to decide whether there is a set $X\subseteq E(\cM)$ of size at most $k$ such that $\rank(\cM)-\rank(\cM-X)\geq h$.

\hardtodecide*

\begin{proof}
We reduce from \textsc{Rank $h$-Reduction} parameterized by~$k$, which is known to be W[1]-hard~\cite{FominGLS18}.
Let $(\cM,h,k)$ be an instance of \textsc{Rank $h$-Reduction} where $\cM$ is a binary matroid of rank $r$. We define $\cM'$ to be the $(r-h+1)$-truncation of $\cM$. Notice that $\cM'$ is a linear matroid and its representation (over a different field) can be constructed in  polynomial time by the result of Lokshtanov et al.~\cite{LokshtanovMPS18}. We claim that $\cM'$ has no $k$-fault-tolerant basis if and only if $(\cM,h,k)$ is a yes-instance of  \textsc{Rank $h$-Reduction}. 

To this end, note that $\cM'$ has no $k$-fault-tolerant basis if and only if there is a set~${X\subseteq E(\cM')}$ of size at most $k$ such that
$\rank(\cM'-X)<\rank(\cM')=\rank(\cM)-h+1$. Note that since~${\rank(\cM'-X)<\rank(\cM)-h+1}$ if and only if $\rank(\cM)-\rank(\cM-X)\geq h$, 
$\cM'$~has no \mbox{$k$-fault-tolerant} basis if and only if there is a set~$X\subseteq E(\cM')$ of size at most $k$ such that~${\rank(\cM)-\rank(\cM-X)\geq h}$. This concludes the proof.   
\end{proof}

We remark that since the hardness for \textsc{Rank $h$-Reduction} was proven by Fomin et al.~\cite{FominGLS18} via a polynomial-time reduction from~\textsc{Clique}, it is \classCoNP-hard to decide whether a linear matroid has a $k$-fault-tolerant basis.
We also note that the problem is in XP since we can decide in $n^{\Oh(k)}$ time whether an $n$-element matroid $\cM$ has a $k$-fault-tolerant basis---simply check for each subset $X\subseteq E(\cM)$ of size $k$ whether~${\rank(\cM-X)=\rank(\cM)}$.

\section{An FPT algorithm for the parameterization by rank and~$k$}\label{sec:fpt}
In this section, we prove \Cref{thm:fpt}. We construct a recursive branching algorithm that finds a set~$W$ of important elements of bounded size with the property that, if the input matroid~$\cM$ has a $k$-fault-tolerant basis, then there is a $k$-fault-tolerant basis $B\subseteq W$.
Note that a $k$-fault-tolerant basis has minimum size by definition.
Then, we select a $k$-fault-tolerant basis (if it exists) in $W$ using brute force.  The construction of $W$ is inspired by \Cref{obs:minbase}, which indicates that $h$-uniform sets are preferable in the construction of $k$-fault-tolerant bases. 
The following lemma is crucial for constructing $W$.

\begin{restatable}{lemma}{uniform}
\label{lem:uniform}
Let $\cM$ be a matroid of rank $r\geq 1$ and $k\geq 0$ be an integer. Let~$X\subseteq E(\cM)$ be an~$h$-uniform set of size at least $(h-1)[(k+1)r]^{r-1}+(k+1)r$ for some $1\leq h\leq r$. Then, for any $k$-fault-tolerant basis~$B$ of~$\cM$, there is a $k$-fault-tolerant basis $B'$ such that 
\begin{itemize}
\item[(i)] $B\setminus\cl(X)=B'\setminus\cl(X)$ and
\item[(ii)] $B'\cap\cl(X)\subseteq X$.
\end{itemize}
\end{restatable}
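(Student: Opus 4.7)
The plan is to proceed by iterated swapping: starting from $B$, I repeatedly replace one element $e \in (B \cap \cl(X)) \setminus X$ with a carefully chosen $e' \in X \setminus B$, so that $B' := (B \setminus \{e\}) \cup \{e'\}$ is again a $k$-fault-tolerant basis. Because $|B'| = |B|$, minimality is preserved and it suffices to verify the rank condition for $B'$. Each swap preserves $B \setminus \cl(X)$ and strictly decreases $|(B \cap \cl(X)) \setminus X|$ by one, so after at most $|B| \leq (k+1)r$ iterations (by \Cref{prop:bounds}) I obtain the desired $B'$ satisfying both~(i) and~(ii).

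To analyze a single swap, I check fault-tolerance of $B'$ against every $F' \subseteq B'$ with $|F'| \leq k$. If $e' \in F'$, setting $F := (F' \setminus \{e'\}) \cup \{e\} \subseteq B$ gives $|F| \leq k$ and $B' \setminus F' = B \setminus F$, which has rank $r$ by fault-tolerance of $B$, so this case is automatic. If $e' \notin F'$, write $C := B \setminus (F' \cup \{e\})$, so that $B' \setminus F' = C \cup \{e'\}$ and $C \cup \{e\} = B \setminus F'$ has rank $r$. The only obstruction is then $\rank(C) = r-1$ and $e' \in \cl(C)$; I call such an $e'$ \emph{bad} and the corresponding rank-$(r-1)$ flat $\cl(C)$ a \emph{bad flat}.

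The key structural observation is that no bad flat contains $\cl(X)$. Otherwise $e \in \cl(X) \subseteq \cl(C)$ would give $\rank(B \setminus F') = \rank(C \cup \{e\}) = \rank(C) = r-1$, contradicting fault-tolerance. Hence $\cl(C) \cap \cl(X)$ is a proper flat of the rank-$h$ restriction $\cM|_{\cl(X)}$ and has rank at most $h-1$; by $h$-uniformity, any subset of $X$ lying in a flat of rank less than $h$ has size at most $h-1$, so $|X \cap \cl(C)| \leq h-1$ for every bad flat. Moreover, every bad flat is $\cl(S)$ for some basis $S \subseteq C \subseteq B \setminus \{e\}$ of size $r-1$, so the number of bad flats is at most $\binom{|B|-1}{r-1} \leq ((k+1)r-1)^{r-1}$.

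Multiplying, the number of bad $e' \in X$ is at most $(h-1)((k+1)r-1)^{r-1}$, while $|X \setminus B| \geq |X| - |B| \geq (h-1)[(k+1)r]^{r-1}$ by the size hypothesis on $X$ and the bound $|B| \leq (k+1)r$ from \Cref{prop:bounds}. Since $(k+1)r - 1 < (k+1)r$ for $r \geq 2$ (and the factor $h-1$ kills the bad count when $h=1$, covering $r=1$ and all $h=1$ cases trivially), a non-bad $e' \in X \setminus B$ always exists, so the swap goes through. The main obstacle is the sparsity observation that no bad flat can swallow $\cl(X)$; once that is in hand, the remaining counting is a direct consequence of $h$-uniformity and the size bound on $B$.
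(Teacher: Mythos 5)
Your proof is correct and takes essentially the same route as the paper's: you swap out an element of $(B\cap\cl(X))\setminus X$ for a well-chosen element of $X$, using $h$-uniformity to show each rank-$(r-1)$ ``bad flat'' meets $X$ in at most $h-1$ elements, bounding the number of such flats by $\binom{|B|-1}{r-1}$, and counting; the paper phrases this as an extremal argument (pick $B'$ minimizing $|B'\cap(\cl(X)\setminus X)|$ and derive a contradiction from a single swap) rather than as explicit iteration, but the core argument is identical. One tiny point to tighten: in the $h=1$ case your bound $|X\setminus B|\geq |X|-|B|$ is $\geq 0$, not necessarily positive; you should use $|X\setminus B|\geq |X|-|B|+1\geq 1$, which holds because the element $e$ being swapped out lies in $B\setminus X$ (this is also what the paper implicitly relies on).
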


\begin{proof}
 Let $B$ be a $k$-fault-tolerant basis of $\cM$. If $B\subseteq \cl(X)$, then the claim is straightforward because in this case, $\cl(X)=E(\cM)$ and $r=h$.
 We can therefore select $B'$ to be the set of $k+r$ arbitrary elements of $X$ by~\Cref{obs:minbase}. We hence assume from now on that~$B\setminus\cl(X)\neq\emptyset$.
 
 Consider a $k$-fault-tolerant basis $B'$ satisfying (i), that is, $B\setminus\cl(X)=B'\setminus\cl(X)$, such that the size of $B'\cap (\cl(X)\setminus X)$ is minimum. We claim that $B'$ satisfies (ii), that is, ${B'\cap\cl(X)\subseteq X}$. 
Assume towards a contradiction that $B'\cap (\cl(X)\setminus X)\neq\emptyset$ and  there is an~${x\in B'\cap (\cl(X)\setminus X)}$. 
Let $Y=B'\setminus\{x\}$. 
Note that $Y$ is not a $k$-fault-tolerant basis of $\cM$.
Hence, there is a set~$F\subseteq Y$ of size at most $k$ such that $\rank(Y\setminus F)<r$. 
Notice that~$\rank(B'\setminus F)=r$ implies~$\rank(Y\setminus F)=r-1$. Denote by $\cF$ the family of all sets $F\subseteq Y$ of size at most~$k$ such that~$\rank(Y\setminus F)=r-1$. 
For each~$F\in \cF$, let~$Z_F\subseteq Y$ be an inclusion-maximal independent set in~$Y\setminus F$ and let~$\cZ=\{Z_F\mid F\in \cF\}$. Note that~$\cl(Z_F)=\cl(Y\setminus F)$ and therefore~$\rank(Z_F)=r-1$ for each~$F \in \cF$.
By~\Cref{prop:bounds}, it holds that~$|Y|\leq (k+1)r-1$. Hence, 
$|\cZ|\leq\binom{(k+1)r-1}{r-1}\leq[(k+1)r]^{r-1}$.
We next show that~$|\cl(Z_F)\cap X|\leq h-1$ for each~$F \in \cF$. 

To this end, assume towards a contradiction that $|\cl(Z_F)\cap X|\geq h$. Since~$X$ is $h$-uniform, we obtain that $X\subseteq \cl(\cl(Z_F)\cap X)$.
Then $\cl(X)\subseteq \cl(\cl(Z_F)\cap X)\subseteq \cl(Z_F)$ and, in particular, $x\in \cl(Z_F)=\cl(Y\setminus F)$. 
However, in this case, 
$B'\setminus F=(Y\cup\{x\})\setminus F\subseteq \cl(Y\setminus F)$. Since~$B'$ is a $k$-fault-tolerant basis, we have that 
$\rank(Y\setminus F)=\rank(B'\setminus F)=r$ contradicting~$F\in\cF$.

Recall that $|X|\geq (h-1)[(k+1)r]^{r-1}+(k+1)r$ and $B'$ has at least one element outside~$X$. Then, $|B'| \leq (k+1)r$ and therefore~$|X\setminus B'|\geq (h-1)[(k+1)r]^{r-1}+1$ by~\Cref{prop:bounds}.
Since~$|\cZ|<[(k+1)r]^{r-1}$ and, for each~$F\in \cF$, $|\cl(Z_F)\cap X|\leq h-1$, a simple counting argument shows that there exists an element~$y\in X\setminus B'$ such that $y\notin\cl(Z_F)$ for all~$F\in\cF$. Let~$B^*=Y\cup\{y\}=(B'\setminus\{x\})\cup\{y\}$. Note that by definition, (i)~$B^*\setminus \cl(X)=B'\setminus \cl(X)$, (ii)~$|B^*\cap(\cl(X)\setminus X)|< |B'\cap(\cl(X)\setminus X)|$, and (iii)~$|B^*|=|B'|$.
We next show that $B^*$ is a $k$-fault-tolerant basis.
Afterwards, we will show that this contradicts our choice of~$B'$.

Towards the former, let $F\subseteq B^*$ be of size at most $k$. We prove that $\rank(B^*\setminus F)=r$.
If $\rank(Y\setminus F)=r$, then $\rank(B^*\setminus F)\geq \rank(Y\setminus F)=r$. 
So we assume from now on that $\rank(Y\setminus F)<r$. Then, $\rank(Y\setminus F)=r-1$ as shown above. 
Assume towards a contradiction that $y\in F$. Since~$y\notin B'$, we have that 
$F'=(F\setminus \{y\})\cup\{x\}$ is a subset of $B'$ of size at most $k$. Since~$B'$ is a $k$-fault-tolerant basis, $\rank(B'\setminus F')=r$. However, $B'\setminus F'=Y\setminus F$ and $\rank(Y\setminus F)=r$, contradicting our assumption that~$\rank(Y \setminus F) < r$.
Hence, $y\notin F$ and therefore~$F\subseteq Y$. Recall that $\rank(Y\setminus F)=r-1$. This implies that $F\in \cF$ and $Z_F\in\cZ$. Since~$y\notin \cl(Z_F)$ by the choice of~$y$, 
$\rank(B^*\setminus F)=\rank((Y\setminus F)\cup\{y\})>\rank(Y\setminus F)=r-1$. Thus, $\rank(B^*\setminus F)=r$. 

Since~$\rank(B^*\setminus F)=r$ for every $F\subseteq B^*$ of size at most $k$ and $|B^*|=|B'|$, we have that~$B^*$ is a $k$-fault-tolerant basis. However, we also obtain that
(i)~$B^*\setminus \cl(X)=B'\setminus \cl(X)$ and (ii)~$|B^*\cap(\cl(X)\setminus X)|< |B'\cap(\cl(X)\setminus X)|$ contradicting the choice of $B'$. This 
shows that~$B'\cap\cl(X)\subseteq X$ and 
completes the proof.
\end{proof}

Now we show how to construct a set~$W$ of important elements of bounded size.

\begin{lemma}\label{lem:important}
There is an algorithm that, given a matroid $\cM$ of rank $r\geq 1$ with~${\{e\}\in \cI(\cM)}$ for each~$e\in E(\cM)$
and an integer $k\geq 0$, outputs a set $W\subseteq E(\cM)$ of at most~${r^{r^2}\cdot [(k+1)r]^{r^3}}$ elements in~$(rk)^{\Oh(r^2)}\cdot n^{\Oh(1)}$ time such that $\cM$ has a $k$-fault-tolerant basis if and only if~$\cM$ has a $k$-fault-tolerant basis $B\subseteq W$.
\end{lemma}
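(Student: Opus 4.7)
The plan is to design a recursive subroutine $\textsc{ImpSet}(F,h)$ that, given a rank-$h$ flat $F$ of $\cM$, returns a set $W_F\subseteq F$ with the following inductive guarantee: for every $k$-fault-tolerant basis $B$ of $\cM$, some $k$-fault-tolerant basis $B'$ of $\cM$ satisfies $B\setminus F=B'\setminus F$ and $B'\cap F\subseteq W_F$. Setting $W:=\textsc{ImpSet}(E(\cM),r)$ then immediately yields the claim of the lemma. Inside each call I first greedily compute a maximal $h$-uniform set $X\subseteq F$ by starting from any basis of $F$ and repeatedly adjoining an element $e\in F$ whenever $e\notin\cl(S)$ for every $(h-1)$-subset $S$ of the current $X$; each such check uses a polynomial number of oracle queries. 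If $|X|\geq N_h:=(h-1)[(k+1)r]^{r-1}+(k+1)r$, I return $X$; otherwise I return $X\cup\bigcup_{S}\textsc{ImpSet}(\cl(S),h-1)$, where $S$ ranges over all $(h-1)$-subsets of $X$. The base case $h=0$ returns $\emptyset$, since $F=\cl(\emptyset)$ contains only loops and there are none by assumption.

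Correctness is proved by induction on $h$ and splits into two cases. When $|X|\geq N_h$, since $X$ is $h$-uniform of rank $h=\rank(F)$ we have $\cl(X)=F$, and \Cref{lem:uniform} directly produces the required $B'$. The main obstacle is the case $|X|<N_h$. The first ingredient there is a structural observation coming from the maximality of $X$: for every $e\in F\setminus X$, the set $X\cup\{e\}$ is not $h$-uniform, which, together with the fact that every subset of $X$ of size at most $h$ is independent, forces the existence of an $(h-1)$-subset $S\subseteq X$ with $e\in\cl(S)$. Hence $F=\cl(X)=\bigcup_{S}\cl(S)$ with $S$ ranging over $(h-1)$-subsets of $X$, which is what justifies the recursion. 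Next, among all $k$-fault-tolerant bases $B'$ of $\cM$ with $B\setminus F=B'\setminus F$, I pick $B^*$ minimizing $|B^*\cap F\setminus W|$ and argue by contradiction: assuming some $x\in B^*\cap F\setminus W$ exists, I choose $S$ of size $h-1$ with $x\in\cl(S)$, apply the inductive hypothesis on $(\cl(S),h-1)$ to the basis $B^*$ to obtain a $k$-fault-tolerant basis $B''$ with $B^*\setminus\cl(S)=B''\setminus\cl(S)$ and $B''\cap\cl(S)\subseteq W_{\cl(S)}\subseteq W$, and observe that $B''$ is a legal competitor for $B^*$ (because $\cl(S)\subseteq F$ implies $B''\setminus F=B^*\setminus F=B\setminus F$) while $|B''\cap F\setminus W|<|B^*\cap F\setminus W|$, contradicting the minimality of $B^*$.

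Finally, I handle the size and running-time analyses by bounding the recursion tree. Writing $T(h)$ for the maximum size of the output at rank $h$ and using $|X|<N_h\leq r\cdot[(k+1)r]^{r-1}$, the recurrence $T(h)\leq N_h+\binom{N_h}{h-1}\cdot T(h-1)$ unrolls to $T(r)\leq r^{r^2}\cdot[(k+1)r]^{r^3}$. Each node of the recursion performs only polynomially many oracle queries to build $X$ and branches into at most $N_h^{h-1}$ children, so summing the work over the depth-$r$ tree yields a total running time of $(rk)^{\Oh(r^2)}\cdot n^{\Oh(1)}$, matching the statement.
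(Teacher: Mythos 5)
Your proposal follows the same global architecture as the paper: a recursive procedure built around $h$-uniform sets, the decomposition $\cl(X)=\bigcup_{S}\cl(S)$ over $(h-1)$-subsets $S$ of an inclusion-maximal $h$-uniform $X$, \Cref{lem:uniform} as the terminal case, and the same recurrence $T(h)\leq N_h+\binom{N_h}{h-1}T(h-1)$ for the output size. Where you diverge is in the correctness proof of the recursive case, and your variant is genuinely different and in fact cleaner: instead of ordering the subsets $S_1,\dots,S_\ell$ and constructing a chain of bases $B_0,\dots,B_\ell$ with a two-layer bookkeeping induction (as the paper does), you use an extremal argument, choosing among all $k$-fault-tolerant bases $B'$ with $B\setminus F=B'\setminus F$ the one minimizing $|B'\cap F\setminus W|$, and then applying the inductive hypothesis once to strictly decrease this quantity if it is nonzero. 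The key calculation goes through: writing $B''\cap F\setminus W = (B^*\cap F\setminus\cl(S))\setminus W$ shows the potential drops strictly because $x\in\cl(S)\cap(B^*\cap F\setminus W)$ is removed while nothing new outside $W$ is introduced, and $B''\setminus F=B^*\setminus F$ because $\cl(S)\subseteq F$. This replaces the paper's chain composition with a single exchange step and a well-founded minimizer, which I consider a more elegant route to the same conclusion. One imprecision to repair: you say you compute a \emph{maximal} $h$-uniform set $X\subseteq F$ and ``return $X$'' when $|X|\geq N_h$; as written, $X$ could have size $|F|$ (e.g., for $h=1$ where $X=F$ always), which would violate the claimed size bound. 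Your own recurrence $T(h)\leq N_h+\binom{N_h}{h-1}T(h-1)$ already presupposes that the greedy halts once $|X|=N_h$, so you should state that explicitly, as the paper does: grow $X$ greedily but stop at size $N_h$, and only if it saturates below $N_h$ do you recurse.
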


\begin{proof}
Let~$\cM$ be a matroid of rank~$r \geq 1$ such that~$\{e\}\in \cI(\cM)$ for any $e\in E(\cM)$ and let~$k$ be a non-negative integer.
We construct a recursive branching algorithm $\textsc{Important}(\cM,X)$ that takes as input a matroid $\cM$ and a non-empty independent set $X\in \cI(\cM)$, and outputs a set~$Y\subseteq \cl(X)$ of size at most 
$|X|^{|X|^2}\cdot [(k+1)r]^{r|X|^2}$
with the property that
for any $k$-fault-tolerant basis $B$ of $\cM$, there is a $k$-fault-tolerant basis $B'$ such that 
\begin{itemize}
\item[(i)] $B\setminus\cl(X)=B'\setminus\cl(X)$ and
\item[(ii)] $B'\cap\cl(X)\subseteq Y$.
\end{itemize}
Let $h=|X|$. The base case is $h=1$, where we do the following:
\begin{itemize}
\item Compute $\cl(X)$.
\item If $|\cl(X)|< (k+1)r$, then set $Y:=\cl(X)$ and output it.
\item Otherwise, define $Y$ to be the set of $(k+1)r$ arbitrary elements of $\cl(X)$ and output it.
\end{itemize}
For $h>1$, we do the following:
\begin{itemize}
\item Compute $\cl(X)$.
\item If $|\cl(X)|\leq (h-1)[(k+1)r]^{r-1}+(k+1)r$, then set $Y:=\cl(X)$, output it, and stop.
\item Find a $h$-uniform set $Z\subseteq\cl(X)$ that either has size $(h-1)[(k+1)r]^{r-1}+(k+1)r$ or is an inclusion-maximal $h$-uniform set of size at most
$(h-1)[(k+1)r]^{r-1}+(k+1)r-1$.
\item If $|Z|=(h-1)[(k+1)r]^{r-1}+(k+1)r$, then set $Y:=Z$ and output it.
\item If $Z$ is an inclusion-maximal $h$-uniform set of size at most
$(h-1)[(k+1)r]^{r-1}+(k+1)r-1$, then output~$Y:=\bigcup_{S\subseteq Z\text{ s.t. }|S|=h-1}Y_S$, where $Y_S$ is the output of $\textsc{Important}(\cM,S)$.
\end{itemize}
To compute an $h$-uniform set $Z$, we apply the following greedy procedure:
\begin{itemize}
\item Initially, set $Z:=X$.
\item While $Z\leq (h-1)[(k+1)r]^{r-1}+(k+1)r-1$, do the following: 
\begin{itemize}
\item For every $x\in\cl(X)\setminus Z$, check whether $Z\cup\{x\}$ is $h$-uniform and set $Z:=Z\cup\{x\}$ if this holds.
\item If $Z\cup\{x\}$ is not $h$-uniform for all $x\in \cl(X)\setminus Z$, then output $Z$ and stop. 
\end{itemize}
\item If $Z=(h-1)[(k+1)r]^{r-1}+(k+1)r$, then output $Z$.
\end{itemize}
The crucial property of the algorithm is given in the following claim. 

\begin{restatable}{claim}{correctness}
\label{cl:main}
$\textsc{Important}(\cM,X)$ outputs a set $Y\subseteq \cl(X)$ with the property that
for any $k$-fault-tolerant basis $B$ of $\cM$, there is a $k$-fault-tolerant basis $B'$ such that 
\begin{itemize}
\item[(i)] $B\setminus\cl(X)=B'\setminus\cl(X)$ and
\item[(ii)] $B'\cap\cl(X)\subseteq Y$.
\end{itemize}
\end{restatable}

\begin{claimproof}
Observe first that the algorithm is finite and always outputs some set $Y$ because the recursion stops when $h=1$, and in the step for $h>1$, we may recursively call the algorithm only for independent sets $S$ of size $h-1$.
Moreover, it is easy to verify that the described greedy strategy correctly constructs an $h$-uniform set $Z$.
Finally, we use induction on~$h$ to show that there is a fault-tolerant basis $B'$ satisfying~(i) and~(ii).

For the base case $h=1$, observe that because $\{e\}\in \cI(\cM)$ for any $e\in E(\cM)$ by the assumptions of the lemma, any non-empty subset of $\cl(X)$ is $1$-uniform. Then \Cref{lem:uniform} immediately implies the claim. 

For~$h\geq 2$, assume that the claim holds for all~$h' < h$.
If $|\cl(X)|\leq (h-1)[(k+1)r]^{r-1}+(k+1)r$, then the existence of $B'$ satisfying (i) and (ii) is trivial as~$Y = \cl(X)$ and hence, we can take~$B'=B$. If $Z$ is an $h$-uniform set of size $(h-1)[(k+1)r]^{r-1}+(k+1)r$, then $B'$ satisfying (i) and (ii) exists by~\Cref{lem:uniform}. Suppose that $Z$ is an inclusion-maximal $h$-uniform subset of $\cl(X)$. Then for any $x\in\cl(X)\setminus Z$, there is a set $S\subseteq Z$ of size $h-1$ such that $x\in\cl(S)$. Otherwise, we would include~$x$ in~$Z$. 
Hence, we have that 
$\cl(X)=\bigcup_{S\subseteq Z\text{ s.t. }|S|=h-1}\cl(Y_S)$, where $Y_S$ is the set obtained in the recursive call $\textsc{Important}(\cM,S)$. 

In our algorithm, we call $\textsc{Important}(\cM,S)$ for every $S\subseteq Z$ of size $h-1$. Denote these sets by~$S_1,\ldots,S_\ell$ and assume that the sets are indexed according to the order in which the algorithm is called for them.
We assume that $Y_i=Y_{S_i}$ for $i\in \{1,\ldots,\ell\}$. The algorithm sets~$Y=\bigcup_{i=1}^\ell Y_i$.

Consider arbitrary $k$-fault-tolerant basis $B=B_0$ of $\cM$. By the induction hypothesis, there a sequence $B_1,\ldots,B_\ell$ of $k$-fault-tolerant bases such that for every $i\in\{1,\ldots,\ell\}$, it holds that
\begin{itemize}
\item[(i$'$)] $B_{i-1}\setminus\cl(S_i)=B_i\setminus\cl(S_i)$ and
\item[(ii$'$)] $B_i\cap\cl(S_i)\subseteq Y_i$.
\end{itemize}
We define $B'=B_\ell$ and claim that (i) and (ii) hold for this choice of $B'$. 

To see (i), we show inductively that~$B_{j-1}\setminus \bigcup_{i=j}^\ell\cl(S_i)=B'\setminus \bigcup_{i=j}^\ell\cl(S_i)$ for each~${j\in\{1,\ldots,\ell\}}$. This holds for $j=\ell$, by (i$'$). Consider $j<\ell$. Then by (i$'$),
$B_{j-1}\setminus\cl(S_j)=B_j\setminus\cl(S_j)$. Also, by the inductive assumption, $B_{j}\setminus \bigcup_{i=j+1}^\ell\cl(S_i)=B'\setminus \bigcup_{i=j+1}^\ell\cl(S_i)$. Then 
\begin{align*}
B_{j-1}\setminus \bigcup_{i=j}^\ell\cl(S_i)=&\ (B_{j-1}\setminus \cl(S_j))\setminus \bigcup_{i=j+1}^\ell\cl(S_i)= (B_{j}\setminus \cl(S_j))\setminus \bigcup_{i=j+1}^\ell\cl(S_i)\\
=&\ (B_j\setminus \bigcup_{i=j+1}^\ell\cl(S_i))\setminus \cl(S_j)=(B'\setminus \bigcup_{i=j+1}^\ell\cl(S_i))\setminus \cl(S_j)\\
=&\ B'\setminus \bigcup_{i=j}^\ell\cl(S_i)
\end{align*}
proving the claim. Because $B=B_0$ and $\cl(X)=\bigcup_{i=1}^\ell(\cl(S_i))$,
we obtain that 
\begin{equation*}
B\setminus\cl(X)=B_0\setminus \bigcup_{i=1}^\ell\cl(S_i)=
B'\setminus \bigcup_{i=1}^\ell\cl(S_i)
=B'\setminus\cl(X).    
\end{equation*}
This proves (i).

To establish (ii), we inductively prove that~${B_j\cap\bigcup_{i=1}^j\cl(S_i)\subseteq \bigcup_{i=1}^jY_i}$ for every $j\in\{1,\ldots,\ell\}$. For $j=1$, then claim holds by (ii$'$). Assume that $j>1$. Then
\begin{equation}\label{eq:ii}
B_j\cap\bigcup_{i=1}^j\cl(S_i)=
(B_j\cap \cl(S_j))\cup((B_j\setminus \cl(S_j))\cap\bigcup_{i=1}^{j-1}\cl(S_i)).
\end{equation}
By (ii$'$), $B_j\cap \cl(S_j)\subseteq Y_j$, and by (ii$'$), 
$B_j\setminus \cl(S_j)=B_{j-1}\setminus \cl(S_j)$. Using \Cref{eq:ii} and the inductive assumption, we obtain that
\begin{align*}
B_j\cap\bigcup_{i=1}^j\cl(S_i)\subseteq &\ Y_j\cup ((B_{j-1}\setminus \cl(S_j))\cap\bigcup_{i=1}^{j-1}\cl(S_i))\subseteq Y_j\cup (B_{j-1}\cap\bigcup_{i=1}^{j-1}\cl(S_i))\\
\subseteq&\ Y_j\cup\bigcup_{i=1}^{j-1}Y_i=\bigcup_{i=1}^jY_j,
\end{align*}
and this proves our claim. Since $\cl(X)=\bigcup_{i=1}^\ell(\cl(S_i))$, $B'=B_\ell$, and $Y=\bigcup_{i=1}^\ell Y_i$
we have that 
\begin{equation*}
B'\cap \cl(X)=B_\ell\cap\bigcup_{i=1}^\ell\cl(S_i)\subseteq\bigcup_{i=1}^\ell Y_i=Y.     
\end{equation*}
Thus, (ii) holds.
This concludes the proof. 
\end{claimproof}

It remains to analyze the size of~$Y$ and the running time.
We next show an upper bound on the size of $Y$ of~$|Y|\leq h^{h^2}\cdot [(k+1)r]^{rh^2}$.
We prove this via induction on $h$.

For $h=1$, note that 
$|Y|\leq (k+1)r\leq h^{h^2}\cdot [(k+1)r]^{rh^2}$. For
$h\geq 2$, consider any independent set $S\subseteq E(\cM)$ of size $h-1$. The algorithm outputs the set $Y_S$, which by the induction hypothesis, has size at most~$N_{h-1}=(h-1)^{(h-1)^2}[(k+1)r]^{r(h-1)^2}$.
By construction, we have
\begin{equation*}
|Y|\leq\max\{(h-1)[(k+1)r]^{r-1}+(k+1)r,\binom{(h-1)[(k+1)r]^{r-1}+(k+1)r-1}{h-1}N_{h-1}\}.  
\end{equation*}
Observe first that
\begin{equation}\label{eq:one}
(h-1)[(k+1)r]^{r-1}+(k+1)r\leq h[(k+1)r]^r  \leq h^{h^2}\cdot [(k+1)r]^{rh^2}.  
\end{equation}
For the second option, note that
\begin{equation*}\label{eq:two}
\begin{aligned}
\binom{(h-1)[(k+1)r]^{r-1}+(k+1)r-1}{h-1}N_{h-1}\leq &\ (h[(k+1)r]^r)^h\cdot N_{h-1}\\
\leq &\ (h[(k+1)r]^r)^h\cdot (h-1)^{(h-1)^2}[(k+1)r]^{r(h-1)^2}\\
\leq&\ h^h[(k+1)r]^{rh}\cdot h^{h^2-h}[(k+1)r]^{r(h^2-h)}\\
\leq&\ h^{h^2}\cdot [(k+1)r]^{rh^2}.
\end{aligned}
\end{equation*}
Combining this equation with \Cref{eq:one}, 
we obtain the required upper bound for $|Y|$.

Finally, we evaluate the running time of $\textsc{Important}$
and show that
$\textsc{Important}(\cM,X)$ runs in $(rk)^{\Oh(rh^2)}\cdot n^{\Oh(1)}$ time.
To this end, notice that $\cl(X)$ can be constructed in polynomial time in the oracle model. 
To construct the $h$-uniform set $Z$ of size $h(rk)^{\Oh(r)}$, we use the greedy procedure where for each $x\in \cl(X)\setminus Z$ for the considered $Z$, we go over all subsets $S$ of $Z$ of size $h-1$ and check whether $S\cup\{x\}$ is independent using the oracle. 
As $h\leq r$, the construction of $Z$ can be done in $(kr)^{\Oh(rh)}\cdot n^{\Oh(1)}$ time. 
The algorithm makes~$(rk)^{\Oh(rh)}$~recursive calls and the depth of the search tree is at most $h$. Summarizing, we obtain that the overall running time is  $(rk)^{\Oh(rh^2)}\cdot n^{\Oh(1)}$.

To complete the proof and construct $W$, we call $\textsc{Important}(\cM,X)$ for an arbitrary basis~$X$ of $\cM$ and set $W=Y$ for the output set of the algorithm. Note that a basis can be found in polynomial time using the independence oracle. Then \Cref{cl:main} implies that if~$\cM$ has a~$k$-fault-tolerant basis, then it also has one in~$W$. Note that the other direction is trivial as~$W \subseteq E(\cM)$.
This concludes the proof.
\end{proof}

We are now ready to prove \Cref{thm:fpt}, which we restate here for convenience.

\fptrankandk*

\begin{proof}
Let $\cM$ be a matroid of rank $r$ and let $k\geq 0$ be an integer. \ftb is trivial for $r=0$ and we can assume that $r\geq 1$. Notice that loops of $\cM$, that is, elements~$e$ such that~$\{e\}\notin\cI(\cM)$ are irrelevant---a loop $e$ is not included in any $k$-fault-tolerant basis and~$e\in\cl(X)$ for any set $X\subseteq E(\cM)$. Hence, we can preprocess $\cM$ and delete the loops. From now on, we assume that $\{e\}\in\cI(\cM)$ for every $e\in E(\cM)$.

We apply the algorithm from~\Cref{lem:important}, and in time $(rk)^{\Oh(r^3)}\cdot n^{\Oh(1)}$, find a set $W\subseteq E(\cM)$ of size at most $r^{r^2}\cdot [(k+1)r]^{r^3}$ such that whenever~$\cM$ has a $k$-fault-tolerant basis, $\cM$ has a $k$-fault-tolerant basis $B\subseteq W$. We consider all candidate subsets $B$ of $W$ with $|B|\leq(k+1)r$ using the upper bound for the size of a $k$-fault-tolerant basis from~\Cref{prop:bounds}. This can be done in $(kr)^{\Oh(kr^4)}$ time. For each candidate set $B$ of size at most $(k+1)r$, we verify whether $B$ is a $k$-fault-tolerant basis as follows. In $\Oh([(k+1)r]^{k})$ time, we check whether~$\rank(B\setminus F)=r$ for every subset $F\subseteq B$ of size at most $k$. Among all candidate sets satisfying the above property, we select a set of minimum size which is a $k$-fault-tolerant basis of $\cM$. The overall running time is $(kr)^{\Oh(kr^4)}\cdot n^{\Oh(1)}$. This concludes the proof.
\end{proof}

\section{Partition matroids}\label{sec:partition}
In this section, we prove \Cref{thm:poly-part}. The proof is based on the following structural lemma.

\begin{lemma}\label{lem:partition}
Let $\cM$ be a partition matroid with unit capacities for a partition $(P_1,\ldots,P_d)$ of the ground set.
Let~$X\subseteq E(\cM)$ and let $k\geq 0$ and $r\geq 1$ be integers. Then, the following properties hold.
\begin{itemize}
\item[(i)] If there is an integer $s\geq 1$ such that $|X|=s(r-1)+k+1$ and $|X\cap P_i|\leq s$ for every $i\in\{1,\ldots,d\}$, then 
for any set $F\subseteq X$ of size at most $k$, $\rank(X\setminus F)\geq r$.
\item[(ii)] If $X$ is an inclusion-minimal set 
such that~${\rank(X\setminus F)\geq r}$ for any set $F\subseteq X$ of size at most $k$,
then there is an integer $s\geq 1$ such that $|X|=s(r-1)+k+1$ and~$|X\cap P_i|\leq s$ for every $i\in\{1,\ldots,d\}$.
\end{itemize}
\end{lemma}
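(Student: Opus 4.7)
The plan is to exploit the very concrete description of rank in a unit-capacity partition matroid: for any $Y \subseteq E(\cM)$, $\rank(Y)$ equals the number of parts $P_i$ with $Y \cap P_i \neq \emptyset$. Throughout, I set $a_i = |X \cap P_i|$ and, after reindexing, assume $a_1 \geq a_2 \geq \ldots \geq a_t > 0$, so that $t$ is the number of parts met by $X$.

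For Part (i), for any $F \subseteq X$ with $|F| \leq k$ one has $|X \setminus F| \geq s(r-1) + 1$, while every part contains at most $|X \cap P_i| \leq s$ elements of $X \setminus F$. Pigeonhole then yields at least $\lceil (s(r-1)+1)/s \rceil = r$ nonempty parts in $X \setminus F$, hence $\rank(X \setminus F) \geq r$.

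For Part (ii), the central observation is that $\rank(X \setminus F) < r$ if and only if $F$ contains all elements of at least $t - r + 1$ of the parts $P_1, \ldots, P_t$; the cheapest such $F$ picks the $t - r + 1$ smallest parts and has size $\sigma := a_r + a_{r+1} + \ldots + a_t$. The hypothesis on $X$ therefore forces $t \geq r$ and $\sigma \geq k + 1$. Inclusion-minimality translates into: for every $x \in X$ there exists $F' \ni x$ of size at most $k + 1$ with $\rank(X \setminus F') < r$. I would analyze this, for $x \in P_j$, in two cases. If $j \geq r$, one can take $F'$ to be the union of the $t - r + 1$ smallest parts, which already contains $x$ and has size $\sigma$, so inclusion-minimality yields $\sigma \leq k + 1$, and combined with $\sigma \geq k + 1$ gives $\sigma = k + 1$. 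If $j \leq r - 1$ (nonempty as soon as $r \geq 2$, since $t \geq r$), the minimum size of such an $F'$ is $\min(a_j + a_{r+1} + \ldots + a_t,\, 1 + \sigma)$: either eliminate $P_j$ alongside the parts $P_{r+1}, \ldots, P_t$, or keep $P_j$ alive and add $x$ as a loose element on top of eliminating $P_r, \ldots, P_t$. Since $\sigma = k + 1$, the second option costs $k + 2$, so inclusion-minimality forces $a_j + a_{r+1} + \ldots + a_t \leq k + 1 = \sigma$, i.e., $a_j \leq a_r$, which combined with the ordering yields $a_j = a_r$.

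Setting $s := a_1 = \ldots = a_r \geq 1$, all remaining $a_i$ satisfy $a_i \leq a_r = s$, and a direct computation gives $|X| = rs + (\sigma - a_r) = s(r - 1) + k + 1$, completing Part (ii); the edge case $r = 1$ collapses to just the $j \geq r$ analysis and is handled by taking $s = a_1$. The main delicate step is this last case analysis: correctly identifying the minimum-cost $F' \ni x$ when $j \leq r - 1$ by comparing the two natural choices (eliminating $P_j$ outright versus paying one extra unit for the lone element $x$) is where the exact identity $\sigma = k + 1$, established via the $j \geq r$ case, becomes crucial.
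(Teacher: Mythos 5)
Your proof of part (i) is identical to the paper's: a direct pigeonhole on $|X\setminus F|\geq s(r-1)+1$.

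For part (ii), your argument is correct but genuinely different from the paper's. The paper orders the parts by decreasing intersection size, sets $s=|X\cap P_{r-1}|$, shows $|X|\geq s(r-1)+k+1$, then explicitly builds a subset $X'\subseteq X$ of that exact size with $|X'\cap P_i|\leq s$ for all~$i$, invokes part (i) to certify that $X'$ already satisfies the fault-tolerance condition, and concludes $X=X'$ by minimality. Your route instead unfolds the minimality hypothesis element by element: you translate ``$X$ is inclusion-minimal'' into the existence, for each $x\in X$, of an $F'\ni x$ of size at most $k+1$ with $\rank(X\setminus F')<r$, compute the exact minimum size of such an $F'$ (which is $\sigma$ when $x$ lies in a small part, and $\min(a_j+\sigma-a_r,\,1+\sigma)$ when $x$ lies in $P_j$ with $j\leq r-1$), and deduce $\sigma=k+1$ and then $a_1=\cdots=a_r$. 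Both arguments are complete; the paper's is somewhat shorter because it reuses part (i) as a black box and avoids the two-case computation of the cheapest $F'\ni x$, whereas yours extracts sharper structural information (the equalities $a_1=\cdots=a_r$ and $\sigma=k+1$) en route. One small phrasing caveat in your $j\geq r$ case: the logic should be ``every $F'$ with $\rank(X\setminus F')<r$ has $|F'|\geq\sigma$, and minimality supplies such an $F'\ni x$ of size $\leq k+1$, hence $\sigma\leq k+1$,'' rather than ``one can take $F'$ of size $\sigma$, so $\sigma\leq k+1$''; the conclusion is the same, but the bound must come from $\sigma$ being the global minimum, not merely achievable.
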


\begin{proof}
To show (i), 
suppose that $|X|=s(r-1)+k+1$ and $|X\cap P_i|\leq s$ for each~${i\in\{1,\ldots,d\}}$ for some integer $s\geq 1$. 
Consider any subset $F\subseteq X$ of size at most $k$. Since~${|X\cap P_i|\leq s}$ for every~$i\in\{1,\ldots,d\}$, it holds that~$|(X \setminus F) \cap P_i| \leq s$ and hence~$|(X \setminus F) \cap P_i| > 0$ for at least~$\Big\lceil\frac{|X \setminus F|}{s}\Big\rceil$ sets~$P_i$.
Since~$\cM$ is a partition matroid with unit capacities, this is also a lower bound on~$\rank(X \setminus F)$ and therefore
\begin{equation*}
\rank(X\setminus F)\geq \Big\lceil\frac{|X\setminus F|}{s}\Big\rceil\geq 
\Big\lceil\frac{s(r-1)+1}{s}\Big\rceil\geq r.
\end{equation*}
This proves that $\rank(X)\geq r$ and~$\rank(X\setminus F)\geq r$ for any set $F\subseteq X$ of size at most $k$.

To prove (ii), suppose that $X$ is an inclusion-minimal set with $\rank(X)\geq r$ such that for any set $F\subseteq X$ of size at most $k$, $\rank(X\setminus F)\geq r$ holds. We assume without loss of generality that~$|X\cap P_1|\geq\dots\geq|X\cap P_d|$. 

If $r=1$, then we set $s=\max\{|X\cap P_i|\mid 1\leq i\leq d\}$. Trivially, $|X|\geq k+1$ and~$|X\cap P_i|\leq s$ for every $i\in\{1,\ldots,d\}$. 
We next show that $|X|=k+1$. Assume towards a contradiction that~$|X| > k+1$.
Any subset $X'\subseteq X$ of size $k+1$ satisfies~$|X'\cap P_i|\leq s$ for every~$i \in \{1,\ldots,d\}$.
Hence, $\rank(X' \setminus F) > 1$ for any set~$F$ of size at most~$k$ by (i). This contradicts the the minimality of $X$ and shows~$|X|=k+1$.

If $r\geq 2$, then $|X\cap P_{r-1}|\geq 1$ 
and we set $s=|X\cap P_{r-1}|$. We show~${|X|=s(r-1)+k+1}$ and~$|X\cap P_i|\leq s$ for every $i\in\{1,\ldots,d\}$ next.
Let ${Y=\bigcup_{j=1}^{r-1}(X\cap P_j)}$ and $Z=\bigcup_{j=r}^d(X\cap P_j)$. 
By definition of $s$, $|Y|\geq s(r-1)$.

If~$|Z|\leq k$, then ${\rank(X\setminus Z)=r-1<r}$, 
contradicting the choice of $X$. 
Thus, $|Z|\geq k+1$ and 
$|X|=|Y|+|Z|\geq s(r-1)+k+1$. 
There exist $Y'\subseteq Y$ such that 
$|Y'\cap P_j|=s$ for each~$j\in\{1,\ldots,r-1\}$, and $Z'\subseteq Z$ such that $|Z'|=k+1$. Consider $X'=Y'\cup Z'$. By definition, 
$|X'\cap P_i|\leq s$ for every $i\in\{1,\ldots,d\}$. Then, by (i), we have that $\rank(X')\geq r$ and for any set $F\subseteq X'$ of size at most $k$, $\rank(X'\setminus F)\geq r$. By the minimality of $X$, we conclude that $X=X'$. This proves~(ii). 
\end{proof}

We are now ready to prove \Cref{thm:poly-part}, which we restate here.

\partition*

\begin{proof}
Consider a weighted partition matroid $\cM$ with unit capacities and a weight function~$w\colon E(\cM)\rightarrow \mathbb{Z}_{\geq 0}$. 
Let $k\geq 0$ and $r\geq 0$ be integers. 
Let also $(P_1,\ldots,P_d)$ be the partition of~$E(\cM)$ defining $\cM$. 
If $r=0$, then the claim of the theorem is trivial as $X=\emptyset$ is a solution. Thus, we can assume that $r\geq 1$. 

For an integer $s\geq 1$, we say that $s$ is \emph{feasible} if a set~$X\subseteq E(\cM)$ of size~${s(r-1)+k+1}$ exists such that $|X\cap P_i|\leq s$ for every $i\in \{1,\ldots,d\}$.
For each integer~${1\leq s\leq \max\{|P_i|\mid 1\leq i\leq d\}}$, we check whether~$s$ is feasible by checking whether~${\sum_{i=1}^d|\min\{|P_i|,s\}|\geq s(r-1)+k+1}$. 
If there is no feasible integer, then we conclude that there is no 
$X\subseteq E(\cM)$ with the property that~$\rank(X\setminus F)\geq r$ for any subset $F\subseteq X$ of size at most $k$ by \Cref{lem:partition}. Otherwise, for every feasible $s$, we greedily choose a set $X$ of minimum weight such that~${|X|=s(r-1)+k+1}$ and~$|X\cap P_i|\leq s$ for every~$i\in \{1,\ldots,d\}$. Notice that the selection of such a set is equivalent to finding a minimum weight independent set of size $s(r-1)+k+1$ in the partition matroid for~$(P_1,\ldots,P_d)$ with the capacities~$(c_1,\ldots,c_d)$ where $c_i=\min\{s,|P_i|\}$ for each~$i\in\{1,\ldots,d\}$. By the well-known matroid properties~\cite{Oxley92}, the greedy algorithm finds such a set $X$. By \Cref{lem:partition}, we conclude that $X$ is a minimum weight set 
such that for any set~$F\subseteq X$ of size at most $k$, $\rank(X\setminus F)\geq r$. This concludes the description of our algorithm and its correctness proof.

To evaluate the running time, notice that the elements of $\cM$ can be sorted by their weight in~$\Oh(n\log n)$ time. Then we have at most $n$ choices of $s$ and for each $s$, the greedy algorithm works in $\Oh(n)$ time. Thus, the overall running time is in~$\Oh(n^2)$, concluding the proof.
\end{proof}

As a corollary, we obtain the following for general matroids of rank at most two.

\begin{restatable}{corollary}{polyrank}
    \label{cor:poly-rank}
    \wftb can be solved in $\Oh(n^2)$ on matroids of rank at most two given by an independence oracle.
\end{restatable}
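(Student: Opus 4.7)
The plan is to reduce the rank-at-most-two case of \wftb to the partition matroid setting covered by \Cref{thm:poly-part}. The key observation is that a matroid of rank at most two is, up to loops and parallelism, completely described by a partition of its non-loop elements into parallel classes, and that this partition can be recovered from an independence oracle within the target time budget.

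First, I would preprocess the input: using $\Oh(n)$ oracle calls, identify and discard all loops (elements $e$ with $\{e\}\notin\cI(\cM)$). Loops can be safely excluded from any $k$-fault-tolerant basis because they contribute non-negative weight but never increase the rank. Then compute $\rank(\cM)\in\{0,1,2\}$, which takes $\Oh(n)$ oracle calls for such small rank. If $\rank(\cM)=0$, output $X=\emptyset$. Otherwise, using $\Oh(n^2)$ oracle calls, determine the parallel classes of the remaining non-loop elements: two elements $e,f$ belong to the same class iff $\{e,f\}\notin\cI(\cM)$ (in the rank-$1$ case this collapses all non-loop elements into a single class). Let $(P_1,\dots,P_d)$ be the resulting partition and let $\cM'$ denote the partition matroid on this ground set with unit capacities.

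The crucial equivalence is the following. Since $\cM$ has rank at most two, for every $Y$ contained in the loop-free ground set we have $\rank_\cM(Y)=\min(\rank(\cM),\ell)$, where $\ell$ is the number of parts $P_i$ that $Y$ meets, while $\rank_{\cM'}(Y)=\ell$. Hence $\rank_\cM(Y)=\rank(\cM)$ if and only if $\rank_{\cM'}(Y)\geq\rank(\cM)$. Consequently, the property that a set $B$ satisfies $\rank_\cM(B\setminus F)=\rank(\cM)$ for every $F\subseteq B$ with $|F|\leq k$ translates exactly to the condition $\rank_{\cM'}(B\setminus F)\geq\rank(\cM)$ under the same quantification. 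Therefore, a minimum-weight $k$-fault-tolerant basis of $\cM$ (or a proof that none exists) is obtained by applying \Cref{thm:poly-part} to $\cM'$ with the same weight function and $r=\rank(\cM)$; if the partition-matroid algorithm reports infeasibility, so does the original instance.

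The running time is $\Oh(n^2)$: loop detection costs $\Oh(n)$ queries, the parallel-class computation costs $\Oh(n^2)$ queries, and the invocation of \Cref{thm:poly-part} runs in $\Oh(n^2)$ as well. The main technical point to get right is the rank equivalence, including the degenerate cases $\rank(\cM)\in\{0,1\}$ and the handling of loops; I do not anticipate any obstacle beyond writing this translation cleanly.
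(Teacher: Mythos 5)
Your proposal is correct and follows essentially the same route as the paper: discard loops, partition the remaining ground set into parallel (collinear) classes, observe that for a rank-$\leq 2$ matroid the rank of any loop-free set equals $\min(\rank(\cM),\ell)$ where $\ell$ is the number of classes it meets, and hand the resulting unit-capacity partition matroid to \Cref{thm:poly-part} with $r=\rank(\cM)$. The only cosmetic difference is that you fold the rank-$1$ case into the same partition-matroid reduction, whereas the paper treats it as a separate direct greedy step before invoking the reduction for rank~$2$.
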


\begin{proof}
Let $(\cM,w,k)$ be an instance of \wftb with~$\rank(\cM)\leq 2$. If $\rank(\cM)=0$, then the problem is trivial as $B=\emptyset$ is a $k$-fault-tolerant basis. Thus, we assume that $\rank(\cM)\geq 1$. 
Furthermore, we assume that $\cM$ has no loops because elements $e\in E(\cM)$ such that $\{e\}\notin \cI(\cM)$ are not included in any $k$-fault-tolerant basis and may be removed in linear time. Hence, $\rank(\{e\})\geq 1$ for any $e\in E(\cM)$.

Suppose that $\rank(\cM)=1$. Then, $k$-fault-tolerant bases are exactly the sets of size $k+1$ and we can choose  such a set of minimum weight greedily.

We assume from now on that $\rank(\cM)=2$. For two $x,y\in E(\cM)$, we say that~$x$ and~$y$ are \emph{collinear} if $y\in \cl(\{x\})$ (or, symmetrically, $x\in\cl(\{y\})$). Notice that this is an equivalence relation. It is trivial that the relation is reflexive, and (CL1)--(CL4) imply transitivity. To see that the relation is symmetric, assume that $y\in\cl(\{x\})$. Then, by definition, ${\rank(\{x,y\})=\rank(\{x\})=1}$. 
If $x\notin \cl(\{y\})$, then $\rank(\{x,y\})>\rank({y})=1$ immediately leading to a contradiction. We partition $E(\cM)$ into equivalence classes $(P_1,\ldots,P_d)$ of collinear elements. Observe that a set $X\subseteq E(\cM)$ is of rank two if and only if $X$ contains elements of at least two sets in the partition. 
We consider the partition matroid $\cM'$ with unit capacities defined by~$(P_1,\ldots,P_d)$ and obtain that $B\subseteq E(\cM)$ is a solution to the instance $(\cM,w,k)$ if and only if $B$ is a minimum weight set 
such that for any set $F\subseteq B$ of size at most $k$, 
the rank of $B\setminus F$ with respect to $\cM'$ is at least two. This allows us to apply \Cref{thm:poly-part} to solve the problem. 

Since the problem for the case $r=1$ can be solved in $\Oh(n\log n)$ time and the partition~$(P_1,\ldots,P_d)$ can be constructed in linear time, we obtain that the overall running time is in~$\Oh(n^2)$. This concludes the proof.
\end{proof}

\section{Complexity dichotomy for the parameterization by rank }\label{sec:lower}
In this section, we prove \Cref{thm:hard-r} which we restate here for convenience.

\hardness*

\begin{proof}
The claim for $r\leq 2$ is proven in \Cref{cor:poly-rank}. Thus, it remains to show the computational lower bound for $r\geq 3$. 
We show the claim for $r=3$ and then explain how to extend it for any~$r\geq 3$.
We reduce from \textsc{General Position Subset Selection}. Recall that a set~$P$ of points on the Euclidean plane is said to be in general position if there are no three points on the same line on the plane. \textsc{General Position Subset Selection} is defined as follows.
Given a set~$P$ of point on the plane and an integer $p\geq 1$, decide whether there is a subset~$Q\subseteq P$ of at least~$p$ points in general position. 
The problem was shown to be \classNP-hard by Froese et al.~\cite{FroeseKNN17} and the result holds for points with rational coordinates.

Let $(P=\Big\{\begin{pmatrix} x_1 \\ y_1 \end{pmatrix},\ldots,
\begin{pmatrix} x_n \\ y_n \end{pmatrix}
\Big\},p)$ be an instance of \textsc{General Position Subset Selection}. We assume without loss of generality that~$p\geq 3$ as any set of at most two points is in general position. We set $k=p-3$, construct the set~$E=\Big\{\begin{pmatrix} x_1 \\ y_1\\ 1 \end{pmatrix},\ldots,
\begin{pmatrix} x_n \\ y_n \\1 \end{pmatrix}
\Big\}$ of vectors, and define $\cM$ to be the linear matroid with the ground set $E$ over rationals. 

Note that~$\rank(\cM) \leq 3$ and that three distinct points $\begin{pmatrix} x_h \\ y_h \end{pmatrix}$, $\begin{pmatrix} x_i \\ y_i \end{pmatrix}$, and $\begin{pmatrix} x_j \\ y_j \end{pmatrix}$ of $P$ are not on the same line if and only if 
$\rank\Big(\Big\{\begin{pmatrix} x_h \\ y_h\\1 \end{pmatrix},\begin{pmatrix} x_i \\ y_i\\1 \end{pmatrix},\begin{pmatrix} x_j \\ y_j\\1 \end{pmatrix}\Big\}\Big)=3$. This implies that for any~$I\subseteq \{1,\ldots,n\}$,
the points of 
$\Big\{\begin{pmatrix} x_i \\ y_i\end{pmatrix}\mid i\in I
\Big\}\subseteq P$ are in general position if and only if
$\Big\{\begin{pmatrix} x_i \\ y_i\\ 1\end{pmatrix}\mid i\in I
\Big\}\subseteq E$ is a $3$-uniform set for $\cM$. 
Hence, $P$ has a subset of at least $p$ points in general position if and only if there is a $3$-uniform subset of $E$ of size at least $p$. By \Cref{prop:bounds}, a $k$-fault-tolerant basis of $\cM$ has size at least $p=k+3$.
By \Cref{obs:minbase}, $\cM$ has a $k$-fault-tolerant basis of size $p$ if and only if $(P,p)$ is a yes-instance of  \textsc{General Position Subset Selection}. This concludes the proof for $r=3$.

To see the claim for $r>3$, we perform the same reduction but in~$r$ dimensions. The constructed vectors have a 0 in all but the first three dimensions (where they have the entries as constructed above).
For each~$4 \leq i \leq r$, 
we then add $k+1$ vectors with zero-entries everywhere except for dimension~$i$, where the entry is~$1$.
Note that any $k$-fault-tolerant basis of the linear matroid over the constructed vectors has to contain all~$(r-3)(k+1)$ newly added vectors.
Moreover, it has to contain a set~$P$ of vectors such that after removing any~$k$ of them, the first three dimensions have to be spanned by the remaining vectors.
This is equivalent to the case~$r=3$ and concludes the proof.
\end{proof}

\section{Conclusion}\label{sec:concl}
In this paper, we initiate the study of the parameterized complexity of \ftb. Our main result is that the problem is fixed-parameter tractable when parameterized by both~$k$ and rank~$r$ of the input matroid. This positive algorithmic result is complemented by computational lower bounds showing \classNP-hardness for constant values for any one of the two parameters alone. 
Our results lead to several open questions.

First, we do not know whether our \classFPT result from~\Cref{thm:fpt} could be extended for \wftb. Our approach based on the choice of $h$-uniform sets is tailored for the unweighted case, and one may need different techniques in the presence of weights.

Second, is it possible to extend our result to the non-uniform model introduced by Adjiashvili et al.~\cite{AHMS22}? Here, we assume that the set of elements of a matroid $\cM$ is partitioned into two subsets $S$ and $V$ of \emph{safe} and \emph{vulnerable} elements, respectively. Then, the task is to either find a set $B\subseteq E(\cM)$ of minimum size such that for any sets $F\subseteq V\cap B$ of size at most $k$, $\rank(\cM-F)=\rank(\cM)$ or correctly report that such a set does not exist. \ftb is the special case of this problem with $S=\emptyset$.

Third, can our results be extended to the model introduced by Adjiashvili et al.~\cite{Adjiashvili2015bulk} where not arbitrary sets of~$k$ elements can fail but possible failure scenarios are part of the input?

Finally, observe that our computational lower bounds from \Cref{prop:hardtodecide}, \Cref{obs:hard-k}, and \Cref{thm:hard-r} do not exclude efficient algorithms for \ftb on special classes of matroids. In particular, we proved in \Cref{thm:poly-part} that \wftb can be solved in polynomial time on truncations of partition matroids with unit capacities. 
While it is straightforward to see that \wftb can be solved in polynomial time for partition matroids with arbitrary capacities,\footnote{Given a partition matroid defined by a partition $(P_1,\ldots,P_d)$ of the ground set and a $d$-tuple of capacities~$(c_1,\ldots,c_d)$, to solve \wftb, we have to choose $c_i+k$ elements of minimum weight from each $P_i$.
}
it is not clear whether the problem can be solved in polynomial time on their truncations. The problem complexity for other fundamental classes of matroids, like transversal matroids and gammoids, is another interesting open question.


\end{document}